\newcommand{\docversion}{full}
\DeclareRobustCommand{\versions}[2]{%
    \ifthenelse{\equal{\docversion}{proceedings}}{#1}{}%
    \ifthenelse{\equal{\docversion}{full}}{#2}{}%
    \ifthenelse{\equal{\docversion}{compare}}{\textcolor{green}{$[$}\textcolor{red}{#1}\textcolor{green}{$|$}\textcolor{blue}{#2}\textcolor{green}{$]$}}{}%
}
\DeclareRobustCommand{\onlyproceedings}[1]{%
    \ifthenelse{\equal{\docversion}{proceedings}}{#1}{}%
    \ifthenelse{\equal{\docversion}{compare}}{\textcolor{red}{#1}}{}%
}
\DeclareRobustCommand{\nowherebutproceedings}[1]{%
    \ifthenelse{\equal{\docversion}{proceedings}}{#1}{}%
}
\newcommand{\R}{\mathbb{R}}
\newcommand{\Z}{{\mathbb Z}}
\newcommand{\la}{\langle}
\newcommand{\ra}{\rangle}
\newcommand{\wh}{\widehat}
\newcommand{\sgn}{\mathrm{sign}}
\newcommand{\littlesum}{\mathop{\textstyle \sum}}
\newcommand{\NS}{\mathbb{NS}}
\newcommand{\AS}{\mathbb{AS}}
\newcommand{\bits}{\{-1,1\}}
\newcommand{\bn}{\bits^n}
\newcommand{\isafunc}{: \bn \rightarrow \bits}
\newcommand{\Ex}{\mathop{{\bf E}\/}}
\newcommand{\Var}{\operatorname{Var}}
\newcommand{\commented}{no}
\newcommand{\rnote}[1]{\footnote{{\bf [[Rocco: {#1}\bf ]] }}}
\newcommand{\inote}[1]{\footnote{{\bf [[Ilias: {#1}\bf ]] }}}
\newcommand{\jnote}[1]{\footnote{{\bf [[Ragesh: {#1}\bf ]] }}}
\newcommand{\anote}[1]{\footnote{{\bf [[Andrew: {#1}\bf ]] }}}
\newcommand{\lnote}[1]{\footnote{{\bf [[Li-Yang: {#1}\bf ]] }}}
\newcommand{\rnote}[1]{}
\newcommand{\inote}[1]{}
\newcommand{\jnote}[1]{}
\newcommand{\anote}[1]{}
\newcommand{\lnote}[1]{}
\begin{document}

\title{On the Distribution of the Fourier Spectrum of Halfspaces}

\author[1]{Ilias Diakonikolas}
\author[2]{Ragesh Jaiswal}
\author[3]{Rocco A. Servedio}
\author[3]{Li-Yang Tan}
\author[4]{Andrew Wan}
\affil[1]{University of California Berkeley, \texttt{ilias@eecs.bekeley.edu}.}
\affil[2]{Indian Institute of Technology Delhi, \texttt{rjaiswal@cse.iitd.ac.in}.}
\affil[3]{Columbia University, \texttt{\{ras2105, liyang\}@cs.columbia.edu}.}
\affil[4]{Tsinghua University, \texttt{andrew@tsinghua.edu.cn}.}

\maketitle

\begin{abstract}
Bourgain~\cite{Bourgain:02} showed that any noise stable Boolean function $f$
can be well-approximated by a junta.
In this note we give an exponential sharpening of the parameters of
Bourgain's result under the additional assumption that $f$ is a halfspace.
\end{abstract}

\section{Introduction} \label{sec:intro}

There is a sequence of results \cite{NisanSzegedy:94,
Friedgut:98, Bourgain:02} in the theory of Boolean functions which share the
following general flavor:
if the Fourier spectrum of a Boolean function $f$ is concentrated on
low-degree coefficients, then $f$ must be close to a junta (a function
that depends only on a small number of its input variables).
Bourgain's theorem \cite{Bourgain:02} is the most recent and
strongest of these results; roughly speaking, it says that if a Boolean function $f$
has low noise sensitivity then $f$ must be close to a junta.   
See Section~\ref{sec:background} for definitions and a precise
statement of Bourgain's theorem.


The parameters in the statement of Bourgain's theorem are essentially
the best possible for general Boolean functions, in the sense that the $n$-variable Majority function 
almost (but not quite) satisfies the premise of the theorem -- its noise sensitivity is only
slightly higher than the bound required by the theorem -- but is very far 
from any junta. \ignore{\rnote{Question:  is it true that Bourgain's theorem 
is (almost) tight in the other following sense:  that there are Boolean 
functions that satisfy its noise sensitivity bound but really require 
juntas of essentially Bourgain's claimed size for $\eps$-approximation, 
and can't be approximated using significantly smaller juntas?  (Obviously 
if there are such functions they won't be halfspaces by our result...) 
If we knew of such functions, they might be worth mentioning too, as 
relevant to our result -- perhaps in an appendix.  
Are such functions known?}}It
is interesting, though, to consider whether quantitative improvements of the 
theorem are possible for restricted classes of Boolean functions; this is 
what we do in this paper, by considering the special case when $f$ is a 
halfspace.  In \cite{DiakonikolasServedio:09}
a quantitatively stronger version of an earlier ``junta
theorem'' due to Friedgut \cite{Friedgut:98} was proved for the
special case of halfspaces, and it was asked whether a similarly strengthened
version of Bourgain's theorem held for halfspaces as well.
Intuitively, any halfspace which has noise sensitivity lower than that of 
Majority should be ``quite unlike Majority'' and thus could reasonably be 
expected to depend on few variables; our result makes this intuition precise.

In this note we show that halfspaces do indeed
satisfy a junta-type theorem which is similar to
Bourgain's but with exponentially better parameters.  Our main result 
shows that if $f$ is a halfspace which (unlike the Majority function) 
satisfies a noise sensitivity bound similar to the one in Bourgain's 
original theorem, then $f$ must be close to a junta of exponentially smaller 
size than is guaranteed by the original theorem.
%
Our proof does not follow either
the approach of Bourgain or of \cite{DiakonikolasServedio:09} but instead is a case analysis based
on the value of a structural parameter
known as the ``critical index'' \cite{Servedio:07cc,DGJ+:10,OS11:chow}
of the halfspace.





\subsection{Background and Statement of Main Result.} \label{sec:background}

We view Boolean functions as mappings $f : \bn \to \bits$.  
All probabilities and expectations over $x \in \bn$
are taken with respect to the uniform distribution, unless
otherwise specified.
We say that $f,g : \bn \to \bits$ are $\eps$-close to each other
(or that $g$ is an $\eps$-approximator to $f$) if $\Pr[f(x) \neq g(x)] \leq \eps$.

A function $f : \bn \to \bits$ is said to be a ``junta on
$\mathcal{J} \subseteq [n]$'' if $f$ only depends on the coordinates
in $\mathcal{J}$. We say that $f$ is a $J$-junta,
$0 \le J \le n$, if it is a junta on some set of cardinality at most
$J$.

\begin{definition}[Noise sensitivity]
The \emph{noise sensitivity} of a Boolean function $f:\bn \to \bits$ {\em at noise rate $\eps$} is defined as
\[\NS_{\eps}(f) = \Pr_{x, y}[f(x) \neq f(y)],\]
where $x$ is uniformly distributed and $y$ is obtained
from $x$ by flipping each bit of $x$ independently with probability $\eps$.
\end{definition}

Bourgain's theorem may be stated as follows (see Theorem~4.3 of 
\cite{KhotNaor06}).  \rnote{Please, someone check carefully
that the following is a correct restatement of Theorem 4.3 of [KN06] --
I want to be sure I did this correctly}

\begin{theorem}[\cite{Bourgain:02}] \label{thm:bourgain}
Fix $f: \{-1,1\}^n \to \{-1,1\}$ and $\eps,\delta$ sufficiently 
small.\footnote{Here and throughout the paper, ``sufficiently small''
means ``in the interval $(0,c)$'' where $c>0$ is some
universal constant that we do not specify.}
If $\NS_{\eps}(f) \leq (\delta \sqrt{\eps})^{1+o(1)}$,
then $f$ is $\delta$-close to a $2^{O(1/\eps)} \cdot \poly(1/\delta)$-junta.
\end{theorem}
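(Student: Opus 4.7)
\textbf{Plan for proving Theorem~\ref{thm:bourgain}.} The natural starting point is the Parseval-style identity
\[
\NS_{\eps}(f) \;=\; \tfrac{1}{2}\sum_{S \neq \emptyset}\bigl(1-(1-2\eps)^{|S|}\bigr)\,\wh{f}(S)^2,
\]
which converts the noise-sensitivity hypothesis into a statement about Fourier weight. Whenever $|S| \geq K := \Theta(1/\eps)$, the factor $1-(1-2\eps)^{|S|}$ is bounded below by an absolute constant, so
\[
\sum_{|S| \geq K} \wh{f}(S)^2 \;\leq\; O(\NS_{\eps}(f)) \;=\; (\delta\sqrt{\eps})^{1+o(1)}.
\]
Thus $f$ is already well-approximated in $L_2$ by its degree-$K$ truncation $f^{\leq K}$, and it suffices to extract a junta from that low-degree piece.

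The second step is a Friedgut-style junta extraction applied to $f^{\leq K}$. I would define $J = \{i \in [n] : \Inf_i^{\leq K}(f) \geq \tau\}$ for a threshold $\tau$ to be chosen. The total low-degree influence satisfies $\sum_i \Inf_i^{\leq K}(f) \leq K$, so $|J| \leq K/\tau$; choosing $\tau = 2^{-\Theta(K)} \cdot \poly(\delta)$ gives the target size $|J| = 2^{O(1/\eps)} \cdot \poly(1/\delta)$. I would then argue that discarding all Fourier coefficients supported on sets $S \not\subseteq J$ introduces $L_2$-error at most $O(\delta^2)$. This is the natural place to invoke hypercontractivity (Bonami--Beckner): coordinates outside $J$ have tiny low-degree influence, and the combined contribution of small-influence coordinates to the degree-$K$ Fourier mass decays exponentially in $K$ once $\tau$ is small enough. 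Rounding the resulting real-valued approximator to $\{-1,1\}$ would then produce a $J$-junta that is $\delta$-close to $f$ in Hamming distance.

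The main obstacle I anticipate is attaining the sharp $\delta\sqrt{\eps}$ dependence in the noise-sensitivity hypothesis rather than the weaker $\delta\eps$ dependence that the direct Friedgut-style argument above would yield. The issue is that a one-shot argument effectively treats $\NS_\eps(f)$ as a crude surrogate for total influence, losing an entire factor of $1/\eps$, whereas the stated theorem requires exploiting the full degree-weighted structure of $\NS_\eps$. Executing this sharpening is the technical heart of the proof, and I expect it to require an iterative random-restriction scheme: one would identify a first batch of high-influence coordinates, restrict, apply the same dichotomy to the restricted function (whose expected noise sensitivity can be related back to $f$'s), and recurse, with a careful second-moment analysis to control the accumulated error. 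I expect essentially all the delicacy of Bourgain's argument to lie in this last step; the Fourier-truncation and junta-extraction portions above should then be relatively routine consequences.
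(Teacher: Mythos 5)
This theorem is not proved anywhere in the paper: it is cited from Bourgain \cite{Bourgain:02} (restated following Theorem~4.3 of \cite{KhotNaor06}) as background, and the paper's own contribution, Theorem~\ref{thm:main}, is a sharpening for halfspaces proved by an entirely different argument (a case analysis on the critical index of the weight vector). So there is no paper proof to compare your proposal against.

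Assessed on its own terms, your sketch has a genuine and, to your credit, self-identified gap. The first step is fine: from $\NS_{\eps}(f) = \tfrac12\sum_{S\neq\emptyset}\bigl(1-(1-2\eps)^{|S|}\bigr)\wh{f}(S)^2$ one does get $\sum_{|S|\geq K}\wh{f}(S)^2 = O(\NS_\eps(f))$ for $K=\Theta(1/\eps)$. But the one-shot Friedgut-style extraction you then outline --- threshold the degree-$K$ influences, bound $|J|$ by $K/\tau$, and use hypercontractivity to kill the mass off $J$ --- only works under a premise of the form $\sum_{|S|\geq K}\wh{f}(S)^2 \lesssim \delta/K$, i.e.\ $\NS_\eps(f) \lesssim \delta\eps$. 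Bourgain's theorem assumes only $(\delta\sqrt{\eps})^{1+o(1)}$, a weaker hypothesis by a factor of roughly $1/\sqrt{\eps}$, and saving that $\sqrt{K}$ is the entire content of the result. You point toward ``an iterative random-restriction scheme with a careful second-moment analysis,'' which is the right shape, but no actual dichotomy is stated or proved --- nothing of the form ``either some coordinate has noticeable low-degree influence, or a random restriction makes the function very biased with high probability'' under the sharp $K^{-1/2-o(1)}$ tail hypothesis. Without that step, the claimed junta size $2^{O(1/\eps)}\cdot\poly(1/\delta)$ is not derived; the routine Friedgut-plus-hypercontractivity portions you describe do not reach the stated hypothesis, and the proposal is a plan, not a proof.
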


A \emph{halfspace}, or \emph{linear threshold function} (henceforth
simply referred to as an LTF), over $\{-1,1\}^n$ is a Boolean function
$f: \{-1,1\}^n \to \{-1,1\}$ of the
form $f(x) = \sgn(\littlesum_{i=1}^n w_ix_i - \theta)$,
where $w_1, \ldots, w_n, \theta \in \R$. The function $\sign(z)$ takes
value $1$ if $z \geq 0$ and takes value $-1$ if $z < 0$; the values
$w_1,\dots,w_n$
are the \emph{weights} of $f$ and $\theta$ is the \emph{threshold}.
LTFs have been intensively studied for decades in
many different fields such as machine learning and computational
learning theory, computational complexity, and voting theory and
the theory of social choice.  

\ignore{
They are variously known as  ``halfspaces'' or ``linear separators'' in
machine learning and computational learning theory, ``Boolean threshold
functions,'' ``(weighted) threshold gates'' and
``(Boolean) perceptrons (of order 1)'' in computational complexity, and as
``weighted majority games'' in voting theory and the theory of social choice.
}

Our main result, given below, is a strengthening
of Bourgain's theorem for halfspaces:

\begin{theorem}[Main Result]\label{thm:main}
\rnote{Shall we replace 
$``O(\delta^3\sqrt{\eps})$ with ``$O(\delta^{(2-\eps)/(1-\eps)} 
\sqrt{\eps})$''?}
Fix $f:\bn \to \bits$ to be any LTF and $\eps, \delta$ sufficiently small.
If $\NS_{\eps}(f) \leq O(\delta^{(2-\eps)/(1-\eps)} \sqrt{\eps}),$
then $f$ is $\delta$-close to an $O\left( (1/\eps^2) \cdot \log(1/\eps) 
\cdot \log(1/\delta) \right)$-junta.
\end{theorem}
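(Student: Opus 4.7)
The plan is a case analysis on the $\tau$-critical index $K^*$ of $f$, following the framework announced in the introduction. Write $f(x) = \sgn(\littlesum_i w_i x_i - \theta)$ with weights sorted so that $|w_1| \ge |w_2| \ge \cdots$, and set $\sigma_k := (\littlesum_{i>k} w_i^2)^{1/2}$. I would choose $\tau = \Theta(\eps/\sqrt{\log(1/\eps)})$ so that the natural scale $J := \Theta((1/\tau^2)\log(1/\delta)) = \Theta((1/\eps^2)\log(1/\eps)\log(1/\delta))$ matches the claimed junta bound, and split on whether $K^* \le J$ or $K^* > J$.

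In the small-critical-index case $K^* \le J$, I would take the $K^*$-variable junta $g(\rho) := \sgn(\Ex[f_\rho])$, whose error is $\Pr[f \ne g] = \Ex_\rho[m_\rho]$ with $m_\rho := \min(\Pr[f_\rho = 1], \Pr[f_\rho = -1])$. For every restriction $\rho$ of the head variables $x_1,\ldots,x_{K^*}$, the residual $f_\rho$ is a $\tau$-regular LTF on the tail, so by the MOO invariance principle together with the exact Gaussian calculation $\NS_\eps(\sgn(G - t)) \asymp \sqrt{\eps}\,\phi(t)$ and the elementary bound $\min(\Phi(t),1-\Phi(t)) \lesssim \phi(t)$, one obtains (up to a $\tau$-error absorbed by our parameter choice) the key inequality $\NS_\eps(f_\rho) \gtrsim \sqrt{\eps}\, m_\rho$. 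Combined with the Fourier identity $\Ex_\rho[\NS_\eps(f_\rho)] \le \NS_\eps(f)$ (which follows from $1-(1-2\eps)^{|U \setminus S|} \le 1-(1-2\eps)^{|U|}$ after Fourier-expanding and averaging $\hat{f_\rho}(T)^2$ over $\rho$), this gives
\[
\Ex_\rho[m_\rho] \;\lesssim\; \NS_\eps(f)/\sqrt{\eps} \;\le\; O\!\left(\delta^{(2-\eps)/(1-\eps)}\right) \;\le\; \delta.
\]

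In the large-critical-index case $K^* > J$, the critical-index recursion $\sigma_k^2 \le (1-\tau^2)\sigma_{k-1}^2$ for $k \le K^*$ produces the geometric decay $\sigma_J^2 \le \sigma_0^2(1-\tau^2)^J \le \delta^{\Omega(1)} \sigma_0^2$, so the tail is negligible compared to the head. I would then take the $J$-junta $g(x) := \sgn(\littlesum_{i \le J} w_i x_i - \theta)$ obtained by truncating the tail: $g$ disagrees with $f$ only when $|T| > |H-\theta|$, where $H := \littlesum_{i \le J} w_i x_i$ and $T := \littlesum_{i > J} w_i x_i$. A Chebyshev bound on $T$ combined with anti-concentration of $H$ (available because every head weight is at least $\tau\sigma_J$ while $\sigma_H \approx \sigma_0 \gg \sigma_J$, letting one invoke a Berry--Esseen/Littlewood--Offord estimate) bounds this event by $\delta$. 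The main obstacle is Case~1: obtaining the noise-sensitivity-vs-bias inequality for $\tau$-regular LTFs sharply enough in its dependence on $m_\rho$, $\eps$, and the invariance-principle error in $\tau$ that precisely the hypothesis exponent $(2-\eps)/(1-\eps)$ falls out after averaging. This forces the invariance principle to be applied in a form tailored to noise sensitivity (not merely single-threshold density), with $\tau$ chosen right at the boundary where the invariance error and the junta-size target are simultaneously met.
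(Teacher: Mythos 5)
Your overall skeleton --- a case split on the critical index, a noise-sensitivity lower bound for regular, not-too-biased LTFs via Gaussian invariance, and the Fourier inequality $\E_\rho[\NS_\eps(f_\rho)] \le \NS_\eps(f)$ --- is exactly the paper's approach. But the central inequality you invoke, $\NS_\eps(\sgn(G-t)) \asymp \sqrt{\eps}\,\phi(t)$, is wrong for $t\neq 0$: the two-dimensional Gaussian computation (Lemma~\ref{lem:gns-lb}) gives $\NS_\eps(\sgn(G-t)) = \Theta\!\left(\sqrt{\eps}\cdot e^{-t^2/(2-2\eps)}\right)$, which, up to poly$(t)$ factors, is $\sqrt{\eps}\,\phi(t)^{1/(1-\eps)}$ rather than $\sqrt{\eps}\,\phi(t)$. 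Since $m := \min(\Phi(t),1-\Phi(t)) = \Theta(\phi(t)/(1+|t|))$, the honest lower bound is $\NS_\eps \gtrsim \sqrt{\eps}\, m^{1/(1-\eps)}$, and $\NS_\eps/(\sqrt{\eps}\,m) \to 0$ as $|t|\to\infty$, so the linear inequality $\NS_\eps(f_\rho) \gtrsim \sqrt{\eps}\, m_\rho$ that your averaging step rests on is false. This is not a cosmetic slip: had it been true, your argument would prove the theorem under the weaker hypothesis $\NS_\eps(f)\le O(\delta\sqrt{\eps})$, and the $(2-\eps)/(1-\eps)$ exponent in the statement --- which is there precisely to compensate for the $1/(1-\eps)$ power in the Gaussian formula --- would be superfluous.

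With the exponent corrected, the one-line averaging no longer closes the case: $x\mapsto x^{1/(1-\eps)}$ is convex, so $\E_\rho[m_\rho^{1/(1-\eps)}]\lesssim \delta^{(2-\eps)/(1-\eps)}$ does not yield $\E_\rho[m_\rho]\lesssim\delta$ by Jensen. The paper instead runs a threshold/Markov split (Cases~IIa and~IIb, i.e.\ Lemmas~\ref{lem:ns-small-ci} and~\ref{lemma:IIb}): if $\Pr_\rho[m_\rho\gtrsim\delta]>\delta$ then $\NS_\eps(f)\gtrsim \delta\cdot\sqrt{\eps}\,\delta^{1/(1-\eps)} = \delta^{(2-\eps)/(1-\eps)}\sqrt{\eps}$, a contradiction; otherwise almost every restriction is $(1-\delta)$-biased and $f$ is $3\delta$-close to a junta over the head. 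You could recover essentially this from your framework by applying Markov to $m_\rho^{1/(1-\eps)}$ and using $m_\rho\le 1/2$, but you also need to control the additive $O(\tau)$ invariance error, which can dominate $\sqrt{\eps}\,m_\rho^{1/(1-\eps)}$ when $m_\rho$ is tiny --- the paper handles this by first disposing of the regime $\delta^{1/(1-\eps)}<\sqrt{\eps}$ via Claim~\ref{claim:small-delta}. Your Chebyshev-plus-anticoncentration sketch for the large-critical-index case is morally the same as the paper's citation of Servedio's theorem (Lemma~\ref{lemma:case3}).
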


Theorem~\ref{thm:main} requires a slightly stronger bound on the 
noise sensitivity in terms of $\delta$, namely as much as $\delta^{(2-\eps)/(1-\eps)}$ versus 
essentially $\delta$, but the resulting junta size bound of 
Theorem~\ref{thm:main} is exponentially smaller, both as a function of $\eps$ and of $\delta$, than the bound of Theorem~\ref{thm:bourgain}.

\ignore{
\rnote{One possibility is to add here a short ``Relation to previous work'' subsection
in which we describe the relevant [DS09] result (the sharpening of Friedgut
for halfspaces) and the contrast with our new result.  Recall that [DS09] showed that if $\AS(f) \leq \I$
then $f$ is $\eps$-close to an $I^2 \cdot \poly(1/\eps)$-junta.  So [DS09]
assumed an average sensitivity bound for halfspaces whereas we assume a noise sensitivity bound.   
Since $\AS(f)=\Theta(n \cdot \NS_{1/n}(f))$ if $f$ is a halfspace, I guess
the assumption of that earlier result -- that average sensitivity is small -- is basically
equivalent to assuming that $\NS_{1/n}$ is small.  With our new result we can assume
that $\NS_\eps$ is small for (pretty much) any $\eps$, and get a junta size bound out.  Should
we have a brief subsection about this, or not?  There's something to be said for not having this paper get too long...If we do have such a subsection, do we want to contrast
the [DS09] bound and our current bound further?}
}

\section{Preliminaries} \label{sec:prelims}

\subsection{Basic Notation.} \label{sec:defs}

For $n \in \Z_+$, we denote by $[n]$ the set $\{1, 2, \ldots, n \}.$ For $a, b, \eps \in \R_+$ we write $a  \stackrel{\eps}{\approx} b$ to indicate that
$|a-b| = O(\eps)$. Let $\mathcal{N}(\mu, \sigma^2)$ denote the Gaussian distribution with mean $\mu$ and variance $\sigma^2$. Let $\phi, \Phi$
denote the probability density function (pdf) and cumulative distribution function (cdf) respectively
of a standard Gaussian random variable $X \sim \mathcal{N}(0,1)$.

\subsection{Probabilistic Facts.} \label{sec:prob}

We require some basic probability results including the standard additive Hoeffding bound (see e.g.~\cite{DP09}):
\begin{theorem} \label{thm:chb}
Let $X_1, \ldots, X_n$ be independent random variables such that for
each $j \in [n]$, $X_j$ is supported on $[a_j, b_j]$ for some $a_j,
b_j \in \R$, $a_j \le b_j$. Let $X \ = \littlesum_{j=1}^{n} X_j$.
Then, for any $t>0$,
$\Pr \big[ |X - \E[X]| \ge t \big] \le 2 \exp \left(
-2t^2/\littlesum_{j=1}^{n} (b_j-a_j)^2   \right).$
\end{theorem}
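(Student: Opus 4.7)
The plan is to follow the standard Chernoff-style argument, reducing to a moment generating function bound on each summand and then optimizing the free parameter. First I would center the variables by setting $Y_j := X_j - \Ex[X_j]$, so that each $Y_j$ has mean zero and is supported on an interval $[a_j',b_j']$ of the same width $b_j - a_j$. Writing $Y := X - \Ex[X] = \littlesum_j Y_j$, I would apply the exponential Markov inequality: for any $s>0$,
\[
\Pr[Y \geq t] \;=\; \Pr\bigl[e^{sY} \geq e^{st}\bigr] \;\leq\; e^{-st}\,\Ex\bigl[e^{sY}\bigr],
\]
and then exploit independence of the $Y_j$ to factor $\Ex[e^{sY}] = \prod_j \Ex[e^{sY_j}]$.

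The heart of the argument is the per-coordinate bound, known as Hoeffding's lemma: for any mean-zero random variable $Y_j$ supported on an interval of width $w_j := b_j - a_j$, one has $\Ex[e^{sY_j}] \leq e^{s^2 w_j^2 / 8}$. I would prove this by writing $Y_j$ as a convex combination of the endpoints of its support (using convexity of $y \mapsto e^{sy}$), then analyzing the logarithm of the resulting upper bound as a function of $s$; a direct calculation of its second derivative and a Taylor expansion around $s=0$ give the factor $w_j^2/8$ in the exponent. This is the one genuinely non-trivial step, and I expect it to be the main (though standard) obstacle.

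Combining the two ingredients yields
\[
\Pr[Y \geq t] \;\leq\; \exp\!\left(-st + \frac{s^2}{8}\littlesum_{j=1}^n (b_j - a_j)^2\right).
\]
Optimizing over $s>0$ by taking $s = 4t/\littlesum_j (b_j-a_j)^2$ gives the one-sided bound $\Pr[Y \geq t] \leq \exp\!\bigl(-2t^2/\littlesum_j (b_j-a_j)^2\bigr)$. Applying the identical argument to $-Y$ (which satisfies the same hypotheses) and a union bound over the two tails yields the claimed factor of $2$ and completes the proof.
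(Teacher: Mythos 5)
Your proposal is correct: it is the standard Chernoff-method proof of the additive Hoeffding bound, with the per-coordinate moment generating function estimate (Hoeffding's lemma), the optimization $s = 4t/\sum_j (b_j-a_j)^2$, and a union bound over the two tails all carried out properly. The paper does not prove this statement at all --- it is quoted as a standard fact with a citation --- so there is nothing to compare beyond noting that your argument is the usual textbook proof and is sound.
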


\noindent The Berry-Ess{\'e}en theorem (see e.g.~\cite{Feller})
gives explicit error bounds for the Central
Limit Theorem:

\begin{theorem} \label{thm:be} (Berry-Ess{\'e}en)
Let $X_1, \dots, X_n$ be independent random variables satisfying
$\E[X_i] = 0$ for all $i \in [n]$, $\sqrt{\littlesum_i \E[X_i^2]} =
\sigma$, and $\littlesum_i \E[|X_i|^3] = \rho_3$.  Let $S = \littlesum_i X_i /\sigma$ and let $F$ denote the cumulative distribution
function (cdf) of $S$. Then
$\sup_x |F(x) - \Phi(x)| \leq \rho_3/\sigma^3$.
\end{theorem}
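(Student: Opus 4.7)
The plan is to use the classical Fourier-analytic proof, combining Esseen's smoothing lemma with a Taylor expansion of the characteristic function of $S$.

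First I would invoke Esseen's smoothing inequality: for any distribution function $F$ and any differentiable distribution function $G$ with $\|G'\|_\infty \leq m$, and any $T > 0$,
\[
\sup_x |F(x) - G(x)| \;\leq\; \frac{1}{\pi} \int_{-T}^{T} \left|\frac{\wh F(t) - \wh G(t)}{t}\right| dt \;+\; \frac{c\, m}{T}
\]
for an absolute constant $c$. Specializing to $G = \Phi$ (so that $m = 1/\sqrt{2\pi}$), the problem reduces to controlling $|\wh F(t) - e^{-t^2/2}|/|t|$ pointwise on a suitable interval $[-T,T]$, where $\wh F(t) = \E[e^{itS}] = \prod_{i=1}^n \wh f_i(t/\sigma)$ and $\wh f_i$ is the characteristic function of $X_i$.

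Next I would Taylor-expand each factor. Since $\E[X_i] = 0$, a third-order Taylor expansion with remainder bounded via $\E[|X_i|^3]$ yields
\[
\wh f_i(u) \;=\; 1 - \tfrac{1}{2} u^2 \E[X_i^2] + \theta_i(u) \cdot \tfrac{|u|^3 \E[|X_i|^3]}{6}, \qquad |\theta_i(u)| \leq 1.
\]
Setting $u = t/\sigma$, taking logs of the product, and applying $|\log(1+z) - z| \leq |z|^2$ for $|z| \leq 1/2$, I would obtain, for $|t|$ in an interval proportional to $\sigma^3/\rho_3$, an estimate of the form
\[
\bigl| \wh F(t) - e^{-t^2/2} \bigr| \;\leq\; C \cdot \frac{|t|^3 \rho_3}{\sigma^3} \cdot e^{-t^2/4}.
\]
Dividing by $|t|$ and integrating in $t$ over $\mathbb{R}$ then produces a bound of order $\rho_3/\sigma^3$ for the integral term in the smoothing inequality.

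Finally I would choose $T$ of order $\sigma^3/\rho_3$, so that the remaining term $c m/T$ in the smoothing inequality is also $O(\rho_3/\sigma^3)$, balancing the two contributions. (If $\rho_3/\sigma^3 \geq 1$ the conclusion is trivial, so we may assume this ratio is small.) The main obstacle is the middle step: ensuring that the Taylor-plus-log argument is valid uniformly for $|t|$ as large as $T \sim \sigma^3/\rho_3$. This requires verifying that both $t^2 \E[X_i^2]/\sigma^2$ and $|t|^3 \E[|X_i|^3]/\sigma^3$ stay below a small constant for each $i$ in that range, which follows from the Lyapunov-type inequality $\E[X_i^2]^{3/2} \leq \E[|X_i|^3]$ (so that each individual summand's contribution to the variance is itself controlled by $\rho_3/\sigma^3$) together with the precise choice of $T$.
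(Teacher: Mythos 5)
Your overall route --- Esseen's smoothing inequality plus a comparison of $\wh F(t)=\E[e^{itS}]$ with $e^{-t^2/2}$ --- is the classical one; the paper does not prove this theorem at all (it quotes it from the literature, e.g.\ Feller), and the standard proof there is of exactly this Fourier-analytic type. The genuine problem is at the step you yourself flag as the main obstacle, and your proposed fix is wrong. Write $L=\rho_3/\sigma^3$; the smoothing term $cm/T$ forces $T$ of order $1/L$. Lyapunov gives $\E[X_i^2]\le \E[|X_i|^3]^{2/3}\le \rho_3^{2/3}$, i.e.\ $\E[X_i^2]/\sigma^2\le L^{2/3}$, so for $|t|$ near $T\sim 1/L$ you only get $t^2\E[X_i^2]/\sigma^2\lesssim L^{-4/3}$ and $|t|^3\E[|X_i|^3]/\sigma^3\lesssim L^{-2}$ --- both quantities \emph{blow up}, rather than stay below a small constant, precisely in the regime $L\to 0$ where the theorem has content. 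Consequently the per-factor ``Taylor expand and take logarithms'' estimate is only valid for $|t|=O(L^{-1/3})$ (an individual factor $\E[e^{itX_i/\sigma}]$ need not be near $1$, and can even vanish, for larger $t$), far short of the range $|t|\le T\sim 1/L$ you need; stopping at $|t|\lesssim L^{-1/3}$ would leave a smoothing error of order $L^{1/3}$, not $L$. The pointwise bound you want, $|\wh F(t)-e^{-t^2/2}|\lesssim L|t|^3e^{-ct^2}$ for $|t|\le c/L$, is in fact true (it is the key lemma in Feller's treatment), but proving it is exactly where the work lies: the classical arguments bridge the range $L^{-1/3}\lesssim |t|\le T$ by other means, e.g.\ the product-difference inequality $|\prod_i a_i-\prod_i b_i|\le\sum_i|a_i-b_i|$ for factors of modulus at most $1$ combined with a separate decay bound on $|\wh F(t)|$ obtained by symmetrization (bounding $|\E[e^{iuX_i}]|^2$), or, in the i.i.d.\ case, the inequality $|\alpha^n-\beta^n|\le n\gamma^{n-1}|\alpha-\beta|$. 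Without such an ingredient your argument does not close.

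A secondary discrepancy: even once repaired, this smoothing-lemma argument yields $\sup_x|F(x)-\Phi(x)|\le C\rho_3/\sigma^3$ for an absolute constant $C$ well above $1$ (the classical Esseen/Feller constants are in the range $3$--$8$), whereas the statement as quoted asserts constant $1$; constants below $1$ are known but require far more delicate analysis. For this paper's purposes the distinction is mostly harmless --- the theorem is used inside $O(\cdot)$ bounds --- except for the parenthetical claim in Fact~\ref{fact:be} that the hidden constant is at most $2$, which implicitly relies on the constant-$1$ form.
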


\begin{definition}
A vector $w=(w_1,\dots,w_n) \in \R^n$ is said to be \emph{$\tau$-regular} if $\max_i |w_i| \le \tau \| w\|_2$.
\end{definition}

An easy consequence of the Berry-Ess{\'e}en theorem is the following fact, which says that
a $\tau$-regular linear form behaves approximately like a Gaussian up to error $O(\tau)$:

\begin{fact} \label{fact:be}
Let $w=(w_1,\dots,w_n)$ be a $\tau$-regular vector in $\R^n$ with $\|w\|_2 = 1$.
Then for any interval $[a,b] \subseteq \R$, we have
$\Pr[\littlesum_{i=1}^n w_i x_i \in (a,b]]  \stackrel{\tau}{\approx}  \Phi(b) - \Phi(a)$.
(In fact, the hidden constant in the $\stackrel{\tau}{\approx}$ is at most $2$.)
\end{fact}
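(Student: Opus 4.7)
The plan is to apply Berry-Ess\'een (Theorem~\ref{thm:be}) directly to the random variables $X_i := w_i x_i$, where $x_i$ is uniform on $\{-1,1\}$.

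First I would verify the hypotheses of Theorem~\ref{thm:be}. Since $\E[x_i] = 0$, we have $\E[X_i] = 0$. Next, $\E[X_i^2] = w_i^2$, so $\sigma^2 = \sum_i \E[X_i^2] = \|w\|_2^2 = 1$, giving $\sigma = 1$ and $S = \sum_i X_i$. For the third moment, $\E[|X_i|^3] = |w_i|^3$, so using $\tau$-regularity,
\[
\rho_3 \;=\; \sum_{i=1}^n |w_i|^3 \;\le\; \Bigl(\max_i |w_i|\Bigr) \cdot \sum_{i=1}^n w_i^2 \;\le\; \tau \cdot \|w\|_2^2 \;=\; \tau.
\]
Plugging into the Berry-Ess\'een conclusion gives $\sup_x |F(x) - \Phi(x)| \le \rho_3 / \sigma^3 \le \tau$, where $F$ is the cdf of $\sum_i w_i x_i$.

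Finally, I would write
\[
\Pr\Bigl[\littlesum_i w_i x_i \in (a,b]\Bigr] \;=\; F(b) - F(a),
\]
and apply the triangle inequality:
\[
\bigl|(F(b) - F(a)) - (\Phi(b) - \Phi(a))\bigr| \;\le\; |F(b) - \Phi(b)| + |F(a) - \Phi(a)| \;\le\; 2\tau,
\]
which establishes the $\stackrel{\tau}{\approx}$ relation with hidden constant at most $2$, as claimed.

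There is no real obstacle here; this is a routine application of Berry-Ess\'een, and the only slightly nontrivial step is the $\tau$-regularity-based bound $\sum_i |w_i|^3 \le \tau \|w\|_2^2$, which is immediate from pulling the maximum out of one factor of $|w_i|$.
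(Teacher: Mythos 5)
Your proof is correct and is exactly the routine Berry--Ess\'een calculation that the paper has in mind; the paper states Fact~\ref{fact:be} without proof, calling it ``an easy consequence of the Berry--Ess\'een theorem,'' and your verification of $\sigma = 1$, $\rho_3 \le \tau$, and the triangle-inequality step for intervals is precisely how that consequence is obtained, including the hidden constant of $2$.
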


We say that two real-valued random variables $X,Y$ are $\rho$-correlated if $\E[XY]=\rho$. We will need the following generalization of Fact~\ref{fact:be} which is a corollary of the two-dimensional Berry-Ess{\'e}en theorem (see e.g. Theorem~68 in~\cite{MORS:10}).

\begin{theorem} \label{thm:2D-BE}
Let $w=(w_1,\dots,w_n)$ be a $\tau$-regular vector in $\R^n$ with $\|w\|_2 = 1$.
Let $(x, y)$ be a pair of $\rho$-correlated $n$-bit binary strings, i.e. a draw of $(x,y)$ is obtained
by drawing $x$ uniformly from $\{-1,1\}^n$ and independently for each $i$ choosing $y_i \in \{-1,1\}$ to satisfy $\E[x_i y_i]  =\rho.$
Then for any intervals $I_1 \subseteq \R$ and $I_2 \subseteq \R$ we have
$\Pr[(\littlesum_{i=1}^n w_i x_i , \littlesum_{i=1}^n w_i y_i ) \in (I_1, I_2)] \stackrel{\tau}{\approx} \Pr[(X, Y) \in (I_1, I_2)],$
where $(X, Y)$ is a pair of $\rho$-correlated standard Gaussians.
\end{theorem}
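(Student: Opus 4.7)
\noindent\textbf{Proof plan for Theorem~\ref{thm:2D-BE}.} The plan is to reduce the statement to a direct invocation of the two-dimensional Berry--Ess\'een theorem (Theorem~68 of \cite{MORS:10}) applied to the independent $\R^2$-valued summands
\[ Z_i \ := \ (w_i x_i, \ w_i y_i), \qquad i \in [n], \]
whose sum is exactly the pair $S := (\sum_i w_i x_i, \sum_i w_i y_i)$ whose joint distribution we want to compare to that of $(X,Y)$.

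First I would compute the mean and covariance of the $Z_i$'s. Each coordinate of $Z_i$ has mean zero, and a quick calculation gives $\E[(w_i x_i)^2] = \E[(w_i y_i)^2] = w_i^2$ and $\E[(w_i x_i)(w_i y_i)] = w_i^2 \E[x_i y_i] = \rho w_i^2$. Summing over $i$ and using $\|w\|_2 = 1$, the covariance matrix of $S$ is $\bigl(\begin{smallmatrix} 1 & \rho \\ \rho & 1 \end{smallmatrix}\bigr)$, which is precisely the covariance of the $\rho$-correlated standard Gaussian pair $(X,Y)$. Next I would bound the relevant third-moment quantity: since $\|Z_i\|_2 \le |w_i|\sqrt{2}$ and $|w_i| \le \tau \|w\|_2 = \tau$, we have
\[ \sum_{i=1}^n \E\bigl[\|Z_i\|_2^3\bigr] \ \le \ 2^{3/2} \sum_{i=1}^n |w_i|^3 \ \le \ 2^{3/2}\, \tau \sum_{i=1}^n w_i^2 \ = \ 2^{3/2}\, \tau. \]
The two-dimensional Berry--Ess\'een theorem then yields $|\Pr[S \in R] - \Pr[(X,Y) \in R]| = O(\tau)$ uniformly over all axis-aligned ``lower-left'' quadrants $R = (-\infty,a] \times (-\infty,b]$.

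To upgrade from quadrants to arbitrary product-of-intervals regions $I_1 \times I_2$, I would write $I_1 \times I_2$ as a signed combination of at most four such quadrants via standard inclusion--exclusion (writing each $I_j$ as a difference of two half-lines). Summing the four resulting $O(\tau)$ error bounds preserves the $O(\tau)$ estimate, which is exactly the meaning of $\stackrel{\tau}{\approx}$ and completes the proof.

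The proof has no real obstacle beyond carefully invoking the $2$D Berry--Ess\'een theorem: the moment computations are routine, and the only mild subtlety is that Berry--Ess\'een is typically stated for joint cdfs (quadrants) while we want it on rectangles, but this is handled by the inclusion--exclusion step with at most a factor-of-four loss in the constant.
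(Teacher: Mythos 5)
Your proposal is correct, and it follows the standard route for deducing this kind of bivariate invariance statement from a two-dimensional Berry--Ess\'een inequality: set up the i.i.d.\ $\R^2$-valued summands $(w_ix_i, w_iy_i)$, check that the aggregate covariance matrix matches $\bigl(\begin{smallmatrix}1&\rho\\\rho&1\end{smallmatrix}\bigr)$, bound $\sum_i \E[\|Z_i\|_2^3]$ by $O(\tau)$ using regularity, invoke the $2$D CLT error bound for joint cdfs, and pass from quadrants to products of intervals by inclusion--exclusion. The moment computations are right, and the $\sum_i |w_i|^3 \le \tau$ step is exactly how regularity is used throughout this literature. One should be slightly careful that whichever formulation of the two-dimensional Berry--Ess\'een theorem one cites (e.g.\ Theorem~68 of \cite{MORS:10}) states the error in terms of quantities matching the ones you computed (some versions normalize differently, or already state the bound for rectangles or convex sets, in which case your inclusion--exclusion step may be unnecessary), but that is a matter of bookkeeping, not of substance.

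For context: the paper itself gives no proof of this statement; it is presented as a corollary of the two-dimensional Berry--Ess\'een theorem with a pointer to \cite{MORS:10}, so there is no ``paper proof'' to compare against. Your argument is precisely the derivation that the citation is implicitly invoking.
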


\subsection{Fourier Basics over $\{-1,1\}^n$.} \label{sec:fourier}

We consider functions $f : \bn \to \R$ (though we often focus on
Boolean-valued functions which map to $\{-1,1\}$), and we view
the inputs $x$ to $f$ as being distributed according to the uniform
distribution. The set of such functions forms a
$2^n$-dimensional inner product space with inner product given by
$\la f, g \ra = \E[f(x)g(x)]$. The set of functions $(\chi_S)_{S
\subseteq [n]}$ defined by $\chi_S(x) = \prod_{i \in S} x_i$ forms a
complete orthonormal basis for this space.  We will often simply
write $x_S$ for $\prod_{i \in S} x_i$.
Given a function $f : \bn \to \R$ we define its \emph{Fourier
coefficients} by $\wh{f}(S) \eqdef \E[f(x) x_S]$, and we have that
$f(x) = \sum_S \wh{f}(S) x_S$.


As an easy consequence of orthonormality we have \emph{Plancherel's
identity} $\la f, g \ra = \sum_S \wh{f}(S) \wh{g}(S)$, which has as
a special case \emph{Parseval's identity}, $\E[f(x)^2] = \sum_S
\wh{f}(S)^2$. From this it follows that for every $f : \bn \to
\bits$ we have $\sum_S \wh{f}(S)^2 = 1$.
It is well-known and easy to show that the noise sensitivity of $f$ can be expressed
as a function of its Fourier spectrum as follows
$\NS_{\eps}(f)  = \frac{1}{2}- \frac{1}{2} \cdot \littlesum_{S \subseteq [n]} (1 - 2 \eps)^{|S|} \cdot \wh{f}(S)^2$.

\section{Proof of Theorem~\ref{thm:main}} \label{sec:main}
Fix $\eps,\delta$ sufficiently small.  
Let $f: \bn \to \bits$ be an LTF satisfying $\NS_{\eps}(f) \leq O(\delta^{\frac{2-\epsilon}{1-\epsilon}} \cdot \sqrt{\eps})$. 
We will show that $f$ is $\delta$-close to an $O\left( (1/\eps^2) \cdot 
\log(1/\eps) \cdot \log(1/\delta) \right)$-junta.

We start by observing that for $\delta^{\frac{1}{1-\eps}} < \sqrt{\eps}$ the desired 
statement follows easily; indeed, under the assumption of the theorem
$f$ is $\delta$-close to a constant function. This is formalized in the 
following simple claim which holds 
for any Boolean function:

\begin{claim} \label{claim:small-delta}
Let $f:\bn \to \bits$ be any Boolean function and $0< \delta^{\frac{1}{1-\epsilon}} < \sqrt{\eps}$.
If $\NS_{\eps}(f) \leq \delta^{\frac{2-\epsilon}{1-\epsilon}} \cdot \sqrt{\eps}$, then
$f$ is $\delta$-close to a constant function.
\end{claim}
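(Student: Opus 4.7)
The plan is to exploit the standard Fourier lower bound on noise sensitivity in terms of variance, and then observe that small variance forces a Boolean function to be close to a constant. All the work happens in a single line of Fourier arithmetic, followed by a short exponent manipulation using the hypothesis $\delta^{1/(1-\eps)} < \sqrt{\eps}$.

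First I would use the Fourier expression
\[
\NS_{\eps}(f) = \tfrac{1}{2} - \tfrac{1}{2}\littlesum_{S} (1-2\eps)^{|S|}\wh{f}(S)^2
\]
together with the bound $(1-2\eps)^{|S|} \le 1-2\eps$ for every $|S|\ge 1$, and the identity $\littlesum_S \wh{f}(S)^2 = 1$, to deduce
\[
\NS_{\eps}(f) \;\ge\; \eps\bigl(1 - \wh{f}(\emptyset)^2\bigr) \;=\; \eps\cdot \Var(f).
\]
Second, set $\mu = \E[f] = \wh{f}(\emptyset)$ and take the candidate constant to be $c \equiv \sgn(\mu)$. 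Since $f$ is $\bits$-valued,
\[
\Pr[f \neq c] \;=\; \tfrac{1}{2}(1 - |\mu|) \;\le\; \tfrac{1}{2}(1-\mu^2) \;=\; \tfrac{1}{2}\Var(f),
\]
where we used $1-|\mu| \le (1-|\mu|)(1+|\mu|) = 1-\mu^2$.

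Combining the two inequalities yields $\Pr[f\neq c] \le \NS_{\eps}(f)/(2\eps)$. Plugging in the hypothesis $\NS_{\eps}(f) \le \delta^{(2-\eps)/(1-\eps)}\sqrt{\eps}$ gives
\[
\Pr[f\neq c] \;\le\; \frac{\delta^{(2-\eps)/(1-\eps)}}{2\sqrt{\eps}} \;=\; \frac{\delta \cdot \delta^{1/(1-\eps)}}{2\sqrt{\eps}},
\]
using the identity $(2-\eps)/(1-\eps) = 1 + 1/(1-\eps)$. Finally, the assumption $\delta^{1/(1-\eps)} < \sqrt{\eps}$ makes the fraction on the right strictly less than $\delta/2 < \delta$, completing the proof.

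There is no real obstacle here: the statement is a purely Fourier-analytic observation that holds for all Boolean functions (no halfspace structure is used), and the exponent $(2-\eps)/(1-\eps)$ in the hypothesis is precisely tuned so that the $\sqrt{\eps}$ in the noise sensitivity bound is cancelled by the lower bound $\NS_{\eps}(f)\ge \eps\cdot\Var(f)$ together with the threshold condition on $\delta$. The only point requiring a hair of care is the passage $1-|\mu|\le 1-\mu^2$ (rather than working directly with $\mu^2$), which is what allows the variance bound to be converted into a bound on distance to a constant.
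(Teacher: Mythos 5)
Your proof is correct and follows essentially the same route as the paper: both derive $\NS_\eps(f) \ge \eps(1-\wh{f}(\emptyset)^2)$ from the Fourier formula plus Parseval, then convert the resulting variance bound into closeness to $\sgn(\wh{f}(\emptyset))$, using the hypothesis $\delta^{1/(1-\eps)}<\sqrt{\eps}$ to close the exponent accounting. You are slightly more explicit about the passage from $\Var(f)$ to $\Pr[f\neq c]$ via $1-|\mu|\le 1-\mu^2$, a step the paper leaves implicit, but the argument is the same.
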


\begin{proof}
For any Boolean function we have  $\littlesum_{S\neq \emptyset} (1 - 2 \eps)^{|S|} \cdot \wh{f}^{2}(S) \leq (1-2\eps) \cdot \littlesum_{S\neq \emptyset} \wh{f}^{2}(S) = (1-2\eps)\cdot(1-\wh{f}^{2}(\emptyset))$ where the equality follows from Parseval's identity. Therefore, we can write
$$\NS_{\eps}(f) = \frac{1}{2} \cdot \left( 1 - \wh{f}^{2}(\emptyset) - \littlesum_{S\neq \emptyset} (1 - 2 \eps)^{|S|} \cdot \wh{f}^{2}(S) \right)
\geq \eps \cdot \left( 1 - \wh{f}^{2}(\emptyset) \right)$$ which
implies $1 - \wh{f}^{2}(\emptyset) \leq \delta^{\frac{2-\eps}{1-\eps}}/\eps^{1/2} \leq
\delta$ where the first inequality follows from the assumed upper
bound on the noise sensitivity and the second uses the assumption that
$\delta^{\frac{1}{1-\eps}} <\sqrt{\eps}$. It follows that $f$ is $\delta$-close to
$\sgn(\hat{f}(\emptyset))$ and this completes the
proof. 
\end{proof}

Using the above lemma, 
for the rest of the proof we can assume that $\delta^{\frac{1}{1-\eps}} \geq 
\sqrt{\eps}$.

Fix a weight-based representation of $f$ as $f(x)=\sign(w \cdot x - \theta)$,
where we assume, without loss of generality, that $\littlesum_i w_i^2=1$ and
$|w_i| \geq |w_{i+1}| >0$, for all $i \in [n-1]$.
For $k \in [n]$, we denote $\sigma_k \eqdef \sqrt{\littlesum_{i=k}^n w_i^2}$.
The proof proceeds by case analysis based on the value of the $\eps$-critical index
of the vector $w$, which we now define.

\begin{definition}[critical index]
We define the \emph{$\tau$-critical index $\ell(\tau)$ of a
vector $w \in \R^n$} as the smallest index $i \in [n]$ for which $|w_i| \leq \tau \cdot \sigma_i$. If
this inequality does not hold for any $i \in [n]$, we define $\ell(\tau) = \infty$.
\end{definition}

The case analysis is essentially the same as the one used in~\cite{Servedio:07cc, DGJ+:10}.
Let $\ell \eqdef \ell (\eps)$ be the $\eps$-critical index of $f$. We fix a parameter
$$L(\eps, \delta) \eqdef \Theta \left( \frac{1}{\eps^2} \cdot \log(1/\eps) \cdot \log(1/\delta) \right)$$
for an appropriately large value of the constant in the $\Theta(\cdot)$.
If $\ell=1$, then the linear form behaves like a Gaussian and must be either biased or noise sensitive.  In Lemma \ref{lem:reg-ns}, we show that such an $f$ is either $\delta$-close to constant or 
has noise sensitivity $\Omega(\delta^\frac{1}{1-\epsilon}\sqrt{\log(1/\delta)} \sqrt{\eps}).$ (See Case I below.)  If $\ell>L$, then previous results \cite{Servedio:07cc} establish that $f$ is $\delta$-close to a junta. (See Case III.) Finally, for $1<\ell<L$, we consider taking random restrictions to the variables before the critical index. If a $(1-\delta)$-fraction of these restrictions result in subfunctions which are very biased, then $f$ must be $3\delta$-close to a junta over the first $L$ variables.  Otherwise, a $\delta$-fraction of the restrictions result in regular LTFs which are not very biased, and we can apply the results from Case I to show that the noise sensitivity of $f$ must be too large to satisfy the conditions of Theorem \ref{thm:main}.  We show this in Lemma \ref{lem:ns-small-ci}, Case II.  Our requirement on the noise sensitivity in Theorem \ref{thm:main}, which is probably stronger than optimal, comes from the analysis of this case.                     

We now proceed to consider each of these three cases formally.  


\noindent \textbf{Case I:} [$\ell =1$, i.e. the vector $w$ is $\eps$-regular.]
In this case, we show that $f$ is $\delta$-close to a constant function.
The argument proceeds as follows: If $|\E[f]| <1-\delta$, we prove (Lemma~\ref{lem:reg-ns}) that
$\NS_{\eps}(f) = \Omega(\delta^{\frac{1}{1-\epsilon}} \sqrt{\log(1/\delta)} \cdot \sqrt{\eps})$ contradicting the assumption of the theorem.
Hence, $|\E[f]|  \geq 1-\delta$, i.e. $f$ is $\delta$-close to a constant.
Our main lemma in this section establishes the intuitive fact that a regular LTF that is not-too-biased towards a constant function has high noise sensitivity.

\begin{lemma} \label{lem:reg-ns}
Fix $0< \eps\leq 1/2$. Let $f:\bn \to \bits$ be an $\eps$-regular LTF $f(x) = \sgn(w\cdot x -\theta)$ that has $|\E[f]| = 1-p$.
Then we have
$$\NS_{\eps}(f)  = \Omega (  (p^{\frac{1}{1-\epsilon}} \sqrt{\log(1/p)} \cdot \sqrt{\eps}) - O(\eps).$$
\end{lemma}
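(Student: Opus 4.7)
The plan is to exploit the $\eps$-regularity of $w$ to reduce the computation of $\NS_\eps(f)$ to a two-dimensional Gaussian calculation, and then to lower-bound the resulting Gaussian quantity directly. Without loss of generality we may assume $\E[f] = 1-p \geq 0$ (replacing $f$ with $-f$ if needed) and $\|w\|_2 = 1$, so that $\Pr[f(x) = -1] = p/2$. By Fact~\ref{fact:be} (the one-dimensional consequence of Berry-Ess\'een for regular linear forms), this yields $\Phi(\theta) = p/2 \pm O(\eps)$. We may further assume $p \geq C\eps$ for a sufficiently large absolute constant $C$, since otherwise the claimed bound is dominated by the $-O(\eps)$ term and there is nothing to prove; under this assumption, $|\theta| = \Theta(\sqrt{\log(1/p)})$ and the Gaussian density satisfies $\phi(\theta) = \Theta(p\sqrt{\log(1/p)})$.

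Next we apply the two-dimensional Berry-Ess\'een theorem (Theorem~\ref{thm:2D-BE}) to the $(1-2\eps)$-correlated pair $(x,y)$ that defines $\NS_\eps(f)$. Since $w$ is $\eps$-regular with $\|w\|_2 = 1$, the reduction gives
$$\NS_\eps(f) = 2\Pr[X>\theta,\,Y<\theta] \pm O(\eps),$$
where $(X,Y)$ is a pair of $(1-2\eps)$-correlated standard Gaussians. To lower-bound this Gaussian probability we write $Y = \rho X + \sqrt{1-\rho^2}\,Z$ with $\rho = 1-2\eps$ and $Z$ an independent standard Gaussian, and exhibit a rectangle
$$R = \{X \in [\theta, \theta + \Delta]\} \cap \{Z \leq -|\theta|\sqrt{\eps/(1-\eps)} - c\},$$
with $\Delta = \Theta(\sqrt\eps)$ and $c$ an absolute constant, chosen so that $R \subseteq \{X>\theta,\,Y<\theta\}$. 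Since $X$ and $Z$ are independent, $\Pr[R]$ factorizes as $\Theta(\phi(\theta)\sqrt\eps) \cdot \Phi(-|\theta|\sqrt{\eps/(1-\eps)} - c)$.

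It remains to estimate the $Z$-factor, which we do by splitting on the size of $|\theta|\sqrt\eps$: when $|\theta|\sqrt\eps = O(1)$ the factor is $\Omega(1)$, while for $|\theta|\sqrt\eps$ large the standard Gaussian tail bound gives $\Omega(p^{\eps/(1-\eps)})$ (the polynomial-in-$|\theta|\sqrt\eps$ corrections from the tail are absorbed by a mild adjustment of $\Delta$). Multiplying against the $X$-factor $\Theta(p\sqrt{\log(1/p)}\sqrt\eps)$ produces a unified lower bound $\Omega(p^{1/(1-\eps)}\sqrt{\log(1/p)}\sqrt\eps)$ via the identity $p \cdot p^{\eps/(1-\eps)} = p^{1/(1-\eps)}$; subtracting the $O(\eps)$ Berry-Ess\'een error completes the proof. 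The main obstacle lies in this last step: one must choose $\Delta$ and $c$ carefully so that $R$ is truly contained in $\{X>\theta,\,Y<\theta\}$, and the polynomial-in-$|\theta|\sqrt\eps$ tail corrections must be absorbed without weakening the $\sqrt{\log(1/p)}\sqrt\eps$ factor in the final bound.
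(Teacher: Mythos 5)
Your plan is structurally the same as the paper's in its first half: both reduce $\NS_\eps(f)$ to the probability $\Pr[\sign(X-\theta)\neq\sign(Y-\theta)]$ for a pair of $\rho$-correlated Gaussians via Theorem~\ref{thm:2D-BE}, and both convert the threshold $\theta$ to the bias $p$ via the one-dimensional estimate $p=\Theta(e^{-\theta^2/2}/(|\theta|+1))$. The divergence is in how the Gaussian wedge probability is bounded. The paper passes to polar coordinates around the apex-shifted center and evaluates the wedge integral essentially exactly, obtaining $\frac{1}{\pi}\arccos(\rho)\,e^{-\theta^2/(1+\rho)}$; this is clean and valid for all $\theta$. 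You instead lower-bound the wedge by a product rectangle in $(X,Z)$ coordinates, which is more elementary but inherently lossy when $|\theta|\sqrt{\eps}$ is large.

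The gap is in your Case B. Writing $t=|\theta|$ and $a = t\sqrt{\eps/(1-\eps)}$, the rectangle's $X$-side contributes $\Theta(\Delta\,\phi(t))$ while the $Z$-side contributes $\Phi(-a - \Theta(\Delta/\sqrt{\eps}))$. You cannot enlarge $\Delta$ to ``absorb'' the polynomial factor $1/a$ coming from the Gaussian tail: once the shift $\Delta/\sqrt{\eps}$ is comparable to or larger than $1/a$, the cross term in $(a+\Delta/\sqrt{\eps})^2$ produces an extra multiplicative loss $e^{-\Theta(a\Delta/\sqrt{\eps})}$, which quickly dwarfs what you gained on the $X$-side. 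Optimizing $\Delta$ (the best choice is $\Delta\asymp 1/t$, not $\Delta\asymp\sqrt{\eps}$) still leaves you with a lower bound of order $e^{-t^2/(2(1-\eps))}/(t^2\sqrt{\eps})$, which is smaller than the paper's $\sqrt{\eps}\,e^{-t^2/(2(1-\eps))}$ by a factor of $\Theta(\eps t^2)=\Theta(a^2)>1$ throughout Case B. So the parenthetical remark that the ``polynomial-in-$|\theta|\sqrt{\eps}$ corrections\dots are absorbed by a mild adjustment of $\Delta$'' is not correct; no rectangle recovers the paper's bound in this regime. The fix is to observe that Case B is vacuous for the lemma: the conclusion carries a $-O(\eps)$ error term, and the main term $p^{1/(1-\eps)}\sqrt{\log(1/p)}\sqrt{\eps}$ exceeds $O(\eps)$ only when (roughly) $p\gtrsim\sqrt{\eps}$ — not merely $p\geq C\eps$ as you wrote — which forces $|\theta|\sqrt{\eps}=O(\sqrt{\eps\log(1/\eps)})=O(1)$. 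In that range your Case A applies directly (and even gives the slightly stronger bound $\Omega(p\sqrt{\log(1/p)}\sqrt{\eps})$), so a corrected version of your argument, discarding Case B and adjusting the cutoff on $p$, would go through. As written, though, the rectangle step in Case B contains an unjustified and in fact false claim.
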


Case I follows easily from the above lemma. Suppose that $p \leq \delta$. 
Then the function $f$ is $\delta$-close to a constant.
Otherwise, the lemma implies that $\NS_{\eps}(f) = \Omega(\delta^{\frac{1}{1-\eps}} 
\sqrt{\log(1/\delta)} \cdot \sqrt{\eps}) - O(\eps)$; since $\delta^{\frac{1}{1-\eps}} \geq \sqrt{\eps}$,
this is $ \Omega(\delta^{\frac{1}{1-\eps}} \sqrt{\log(1/\delta)} \cdot \sqrt{\eps})$.
This contradicts our assumed upper bound on $\NS_\eps(f)$
from the statement of the main theorem.

The proof of Lemma~\ref{lem:reg-ns} proceeds by first establishing the 
analogous statement in Gaussian space (Lemma~\ref{lem:gns-lb} below)
and  then using invariance to transfer the statement to the Boolean setting.

We start by giving a lower bound on the Gaussian noise sensitivity of 
any LTF as a function the noise rate and the threshold of the LTF.
The following lemma is classical for $\theta = 0$. We were not able to 
find an explicit reference for arbitrary $\theta$, so we give
a proof for the sake of completeness.

\begin{lemma}\label{lem:gns-lb}
Let $0 < \eps \leq 1/2$ and $\theta \in \mathbb{R}$.
Let $X$ and $Y$ be $\rho \eqdef (1 - 2 \eps)$-correlated standard Gaussians. Then,
\[\Pr[\sign(X - \theta) \neq \sign(Y - \theta)] \geq (1/\pi) \cdot \mathrm{arccos}(\rho) \cdot e^{-\frac{\theta^2}{1+\rho}}.\]
\end{lemma}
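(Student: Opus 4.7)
The plan is to work directly with the joint density of the $\rho$-correlated standard Gaussians and exploit the central symmetry of the event. Writing $(X,Y)$ with joint density $p_{\rho}(x,y) = \frac{1}{2\pi\sqrt{1-\rho^2}} \exp\!\left(-\tfrac{x^2 - 2\rho xy + y^2}{2(1-\rho^2)}\right)$, the event $\sign(X-\theta) \neq \sign(Y-\theta)$ is $\{(X,Y) : (X-\theta)(Y-\theta) < 0\}$. First I would perform the change of variables $u = x - \theta$, $v = y - \theta$, which maps this event onto the region $R := \{(u,v) : uv < 0\}$ centered at the origin.

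The key computation is to expand the quadratic form in the exponent and verify that
\[
p_{\rho}(u+\theta,\, v+\theta) \;=\; p_{\rho}(u,v) \cdot \exp\!\left(-\frac{\theta(u+v)}{1+\rho}\right) \cdot \exp\!\left(-\frac{\theta^2}{1+\rho}\right).
\]
The factor $e^{-\theta^2/(1+\rho)}$ is exactly the one appearing in the target bound, so after pulling it out of the integral, the claim reduces to showing
\[
\int_{R} p_{\rho}(u,v) \, e^{-\theta(u+v)/(1+\rho)} \, du\, dv \;\geq\; \frac{\arccos(\rho)}{\pi}.
\]

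Next I would symmetrize using the map $(u,v) \mapsto (-u,-v)$, which preserves both the region $R$ and the density $p_\rho(u,v)$ (since $p_\rho$ is even). Averaging the integrand with its reflection turns $e^{-\theta(u+v)/(1+\rho)}$ into $\cosh\!\bigl(\theta(u+v)/(1+\rho)\bigr) \geq 1$, so the left side is at least $\int_{R} p_{\rho}(u,v)\, du\, dv$. Finally, this last integral equals $\arccos(\rho)/\pi$ by Sheppard's classical formula (which is the $\theta = 0$ case of the lemma, or equivalently the Gaussian noise sensitivity of the origin-centered halfspace). Combining gives the stated bound.

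There is no real obstacle beyond the bookkeeping in the change-of-variables expansion; the whole argument rests on the fortunate algebraic fact that the shift by $(\theta,\theta)$ produces an exponential factor that splits into a constant piece and a piece odd in $(u+v)$, which is exactly what makes the $\cosh \geq 1$ step work. One should just double-check that Sheppard's formula is taken for granted (it is the $\theta = 0$ case, stated as classical in the excerpt) rather than re-proved.
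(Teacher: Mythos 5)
Your proof is correct, and it takes a genuinely different route from the paper's. The paper argues geometrically: it writes $(X,Y)$ in terms of two independent standard Gaussians $Z_1,Z_2$, interprets the event $\{\sign(X-\theta)\neq\sign(Y-\theta)\}$ as a shifted $2$-dimensional Gaussian landing in the double wedge that ``splits'' the direction vectors $(1,0)$ and $(\rho,\sqrt{1-\rho^2})$, and then integrates in polar coordinates about the shifted center $(\alpha\theta,-\theta)$, where $\alpha=\sqrt{(1-\rho)/(1+\rho)}$; the heart of that argument is the geometric observation (Claim~\ref{claim:diff}) that the angular width of the wedge stays constant at $\arccos\rho$ for all radii $r>r_0=\sqrt{2}|\theta|/\sqrt{1+\rho}$, giving the lower bound $\frac{1}{\pi}\arccos(\rho)\,e^{-r_0^2/2}$. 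Your proof instead works directly with the bivariate density: the completing-the-square computation $p_\rho(u+\theta,v+\theta)=p_\rho(u,v)\,e^{-\theta(u+v)/(1+\rho)}\,e^{-\theta^2/(1+\rho)}$ is correct (the cross terms combine to $2\theta(1-\rho)(u+v)$ and $2\theta^2(1-\rho)$ in the numerator, and dividing by $2(1-\rho^2)$ gives precisely the two exponential factors). Since the region $\{uv<0\}$ and the even density $p_\rho$ are both invariant under $(u,v)\mapsto(-u,-v)$, the integral of the odd-in-$(u+v)$ factor equals the integral of its symmetrization $\cosh(\theta(u+v)/(1+\rho))\geq 1$, and Sheppard's formula $\Pr[XY<0]=\arccos(\rho)/\pi$ finishes the job. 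Your argument is arguably cleaner: it dispenses with the figure and the polar bookkeeping, and it makes transparent both where the inequality can be strict (the $\cosh\geq1$ step, strict whenever $\theta\neq0$) and why the bound is sharp at $\theta=0$. The paper's geometric picture offers a more visual account of the region being integrated, at the cost of a less self-contained calculation. Both are valid.
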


\begin{proof}
Let $X$ and $Y$ be $\rho$-correlated standard Gaussians. As is well known, $(X, Y)$ can be generated as follows
\[X = Z_1 = (Z_1, Z_2)\cdot (1, 0)^{T} \ \ \textrm{and} \ \  Y = \rho \cdot Z_1 + \sqrt{1 - \rho^2} \cdot Z_2 = (Z_1, Z_2)\cdot (\rho, \sqrt{1 - \rho^2})^{T}.\]
where $Z_1$ and $Z_2$ are {\em independent} standard Gaussians. For the random variables $X - \theta$ and $Y - \theta$ we can write
\begin{eqnarray*}
X - \theta &=& \left( Z_1 - \theta , Z_2 - \theta \cdot \sqrt{\frac{1 - \rho}{1 + \rho}} \right) \cdot (1, 0)^{T} \ \ \textrm{and}  \\
Y - \theta &=& \left( Z_1 - \theta , Z_2 - \theta \cdot \sqrt{\frac{1 - \rho}{1 + \rho}} \right)\cdot (\rho, \sqrt{1 - \rho^2})^{T}.
\end{eqnarray*}
Fix $\alpha \eqdef \sqrt{\frac{1 - \rho}{1 + \rho}}$ and consider the $2$-dimensional random vector $T = \left( -Z_2 + \alpha \theta, Z_1 - \theta \right)$. Note that $T$ is orthogonal to the vector $\left( Z_1 - \theta , Z_2 - \alpha \theta \right)$.


We now observe that
\[\Pr[\sign(X - \theta) \ne \sign(Y - \theta)] = \Pr[T \textrm{ ``splits'' vectors } (1, 0) \textrm{ and } (\rho, \sqrt{1 - \rho^2})]\]
We refer to Figure~\ref{fig:one} for the rest of the proof.
\rnote{A couple of things to change in the figure, I think:

\begin{itemize}

\item Can we get rid of all the letters from the .png file 
and then add them using the tex file
later?  I think there's a way to do this -- it's a small thing
but it will be easier to parse the picture if the letters
are in the same font as the rest of the paper.

\item Let's put a big black dot at the origin and then (in the tex file)
label it ``(0,0)''.  Might help to similarly have a dot labeled ``(1,0)''.

\item The picture is a little confusing.  I don't see why $y(r)=\gamma(r)$
necessarily?  If $\alpha=0$ (so $\rho=0$) and $r$ approaches $r_0$ from above,
then it seems the angle $y(r)$ would approach $\pi/2$ but from the picture
it seems $\gamma(r)$ would approach $r/4$.

\end{itemize}

}
\begin{figure}
\centering
\includegraphics[scale=0.5]{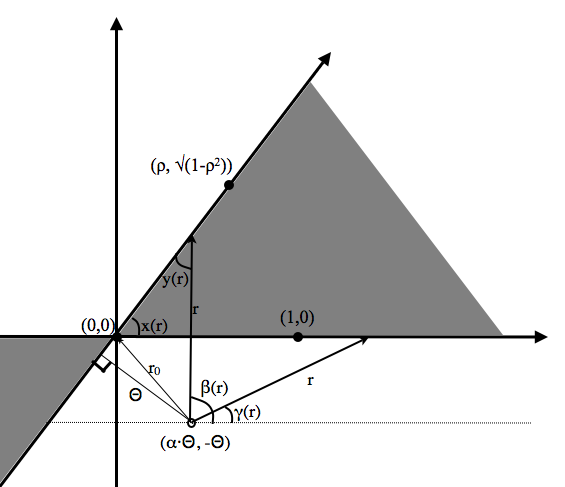} 
\caption{Illustration of the integration region for Lemma~\ref{lem:gns-lb}.}
\label{fig:one}
\end{figure}
Let $R$ be the region between the horizontal axis (the line spanned by $(1,0)$) and the line spanned by the vector $(\rho,\sqrt{1-\rho^2})$. 
The RHS of the above equation is equal to the probability mass of $R$ under a 2-dimensional unit variance Gaussian centered at $(\alpha \theta, - \theta)$.
\rnote{I'm happy with the argument up to here but then I'm confused by the
next sentence -- I don't see why the integral that's given captures
what we want.}
We estimate the Gaussian integral restricted to the region by considering points at distance $r\geq r_0$ from $(\alpha\theta,-\theta)$.  Using polar coordinates to compute the integral, we obtain:      

\begin{eqnarray}
\Pr[T \textrm{ ``splits'' vectors } (1, 0) \textrm{ and } (\rho, \sqrt{1 - \rho^2})] &\geq&    \frac{1}{\pi} \int_{r_0}^{\infty} \int_{\gamma(r)}^{\beta(r)} r e^{-r^2/2} d\phi dr \nonumber \\
 &=& \frac{1}{\pi} \int_{r_0}^{\infty} \left( \beta(r) - \gamma(r) \right) r e^{-r^2/2} dr. \label{eqn:one}
\end{eqnarray}
The angles $\beta(r), \gamma(r)$ are illustrated in Figure~\ref{fig:one}, and
$r_0$ is the distance of the point $(\alpha \theta, -\theta)$ from the origin, i.e.
\begin{equation}\label{eqn:r0}
r_0 = \theta \sqrt{1 + \alpha^2} = \frac{\sqrt{2} \theta}{\sqrt{1 + \rho}}
\end{equation}
where the second equality follows from the definition of $\alpha$.
To compute (\ref{eqn:one}), we need the following claim:
\begin{claim}\label{claim:diff}
For all $r > r_0$, it holds that $(\beta - \gamma)(r) = \mathrm{arccos}(\rho).$
\end{claim}
\begin{proof}
Let $x(r)$ and $y(r)$ denote the angles illustrated in Figure~\ref{fig:one}. First, observe that $\beta(r)= x(r)+y(r)$ and  
that $x(r) =  \mathrm{arccos}(\rho)$. We also have that $\gamma(r) = \mathrm{arcsin}(\theta/r)$ (the vector of length $r$ originates at $(\alpha \theta,-\theta)$ and stops at the origin).   
Finally, an easy calculation shows that the distance from $(\alpha\theta,-\theta)$ to the line spanned by $(\rho,\sqrt{1-\rho^2})$ is exactly $\theta$, and hence $y(r)=\mathrm{arcsin}(\theta/r)$.  
\end{proof}
Therefore, the RHS of (\ref{eqn:one}) can be written as follows:
\begin{eqnarray*}
\frac{1}{\pi} \int_{r_0}^{\infty} (\beta - \gamma)(r) r e^{-r^2/2} dr  &=& (1/\pi) \cdot \mathrm{arccos}(\rho) \cdot \int_{r_0}^{\infty} r e^{-r^2/2} dr \quad \textrm{(using Claim~\ref{claim:diff})}\\
&=& (1/\pi) \cdot \mathrm{arccos}(\rho) \left[-e^{-r^2/2} \right]_{r_0}^{\infty} \\
&=& (1/\pi) \cdot \mathrm{arccos}(\rho) \cdot e^{-r_0^2/2} \\
&=& (1/\pi) \cdot \mathrm{arccos}(\rho) \cdot e^{-\frac{\theta^2}{1 + \rho}}
\end{eqnarray*}
where the last equality follows from (\ref{eqn:r0}). This concludes the proof of Lemma~\ref{lem:gns-lb}.
\end{proof}

\noindent We are now ready to give the proof of Lemma~\ref{lem:reg-ns}.

\begin{proof}[\textbf{Proof of Lemma~\ref{lem:reg-ns}}]
We first bound from below the Gaussian sensitivity of a halfspace as a function of its bias and the noise rate.
Let $(X,Y)$ be a pair of $\rho \eqdef (1-2\eps)$-correlated standard Gaussians.
Consider the one-dimensional halfspace $h_{\theta}:\R \to \bits$ defined as $h_{\theta}(x) = \sign(x-\theta)$
and let $\left| \E_{x \sim \mathcal{N}(0,1)} [h_{\theta}(x)] \right| = 1 - \widetilde{p}$. We claim that
\begin{equation} \label{eqn:gns-bias}
\Pr[ h_{\theta}(X)  \neq h_{\theta}(Y)] = \Omega(\widetilde{p}^{\frac{1}{1-\epsilon}} \sqrt{\log(1/\widetilde{p})} \cdot \sqrt{\eps}).
\end{equation}
We show (\ref{eqn:gns-bias}) as follows: Lemma~\ref{lem:gns-lb} implies that
\begin{equation} \label{eqn:gns-one}
\Pr[ h_{\theta}(X)  \neq h_{\theta}(Y)] = \Omega (\sqrt{\eps} \cdot e^{-\frac{\theta^2}{2-2\epsilon}})
\end{equation}
where we used the elementary inequality $\mathrm{arccos}(1-2\eps) = \Omega(\sqrt{\eps})$.
We now relate $\widetilde{p}$ and $\theta$. We claim that $$\widetilde{p} =  \Theta \left( \frac{e^{-\theta^2/2}}{|\theta|+1} \right).$$
From this it follows that $$e^{-\theta^2/2} = \Theta \left( \widetilde{p} \sqrt{\log(1/\widetilde{p})} \right)$$ and
(\ref{eqn:gns-one}) yields  (\ref{eqn:gns-bias}).
It remains to get the desired bound on $\widetilde{p}$. Assume that $\theta \geq 0$; for $\theta<0$ the argument is symmetric.
First, it is easy to see that
$$ \E_{x \sim \mathcal{N}(0,1)} [h_{\theta}(x)] = -1 + 2 \widetilde{\Phi}(\theta)$$
where $\widetilde{\Phi}(\theta) \eqdef \Pr_{x \sim \mathcal{N}(0,1)} \left[ x \geq \theta \right]$. Since $\theta \geq 0$, we have $\widetilde{\Phi}(\theta) \leq 1/2$, hence
$\widetilde{p} =  2 \widetilde{\Phi}(\theta).$
The desired bound on $\widetilde{p}$ now 
follows from the following elementary fact:
\begin{fact} \label{fact:gaussian-cdf}
For all $\theta \geq 0$, it holds $\widetilde{\Phi}(\theta) = \Theta (\frac{e^{-\theta^2/2}}{|\theta|+1}).$
\end{fact}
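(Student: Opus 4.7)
The fact is the standard two-sided Mills ratio bound for the Gaussian tail, so the plan is to split into the regimes $\theta \in [0,1]$ and $\theta > 1$ and handle each with a short argument. In the bounded regime, both sides of the claimed estimate are bounded above and below by absolute constants: $\widetilde{\Phi}(\theta)$ ranges continuously between $1/2$ and $\widetilde{\Phi}(1)>0$, and $e^{-\theta^2/2}/(\theta+1)$ lies in $[e^{-1/2}/2,\,1]$, so the ratio of the two quantities is trapped in $[c_1,c_2]$ for some universal constants. Hence the $\Theta$ equality is immediate on $[0,1]$, and it suffices to prove $\widetilde{\Phi}(\theta) = \Theta(e^{-\theta^2/2}/\theta)$ for $\theta>1$ (using that $\theta$ and $\theta+1$ are within a factor of $2$ of each other in this range).

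For the tail regime $\theta>1$, I would derive both bounds directly from the integral $\widetilde{\Phi}(\theta) = \frac{1}{\sqrt{2\pi}} \int_\theta^\infty e^{-t^2/2}\,dt$. The upper bound comes from the trivial comparison $1 \leq t/\theta$ on the range of integration, giving
\[
\widetilde{\Phi}(\theta) \;\leq\; \frac{1}{\theta\sqrt{2\pi}} \int_\theta^\infty t\, e^{-t^2/2}\,dt \;=\; \frac{1}{\theta\sqrt{2\pi}}\, e^{-\theta^2/2}.
\]
For the matching lower bound, I would integrate by parts with $u = 1/t$ and $dv = t e^{-t^2/2}\,dt$, obtaining
\[
\int_\theta^\infty e^{-t^2/2}\,dt \;=\; \frac{e^{-\theta^2/2}}{\theta} \;-\; \int_\theta^\infty \frac{e^{-t^2/2}}{t^2}\,dt.
\]
Bounding the remainder integral by $(1/\theta^2)\int_\theta^\infty e^{-t^2/2}\,dt$ and rearranging yields $\int_\theta^\infty e^{-t^2/2}\,dt \;\geq\; \frac{\theta}{\theta^2+1}\, e^{-\theta^2/2}$, which is $\Omega(e^{-\theta^2/2}/\theta)$ for $\theta>1$.

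Combining the two regimes gives $\widetilde{\Phi}(\theta) = \Theta(e^{-\theta^2/2}/(|\theta|+1))$ throughout $\theta \geq 0$, as desired. There is no real obstacle here: the only mild care is in how the $1/\theta$ and $1/(\theta+1)$ normalizations agree up to a constant factor (which fails near $\theta=0$, exactly why the split into two regimes is needed), and in ensuring that the integration-by-parts remainder can be absorbed to produce a clean lower bound. Everything else is a routine calculation that I would not spell out in more detail than above.
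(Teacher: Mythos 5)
The paper states Fact~\ref{fact:gaussian-cdf} without any proof, presenting it as an elementary and standard estimate on the Gaussian tail, so there is no ``paper proof'' to compare against. Your argument is correct and is precisely the standard derivation of the two-sided Mills ratio bounds: the comparison $1 \leq t/\theta$ for the upper bound and a single integration by parts with the remainder absorbed into the main term for the lower bound $\int_\theta^\infty e^{-t^2/2}\,dt \geq \frac{\theta}{\theta^2+1}e^{-\theta^2/2}$, together with the observation that $1/\theta$ and $1/(\theta+1)$ agree up to a constant factor for $\theta > 1$ while both sides are bounded above and below by positive constants on $[0,1]$. This fills in exactly what the authors elided, and there are no gaps.
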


We now turn to the Boolean setting to finish the proof of 
Lemma~\ref{lem:reg-ns}.  Let $f = \sign(w\cdot x - \theta)$ be a 
Boolean $\eps$-regular LTF (where without loss of generality $\| w\|_2=1$)
that has $|\E[f]| = 1-p$. We use~(\ref{eqn:gns-bias}) and invariance to prove 
the lemma.  The following sequence of inequalities completes the proof:
\begin{eqnarray}
\NS_{\eps}(f) &=& \Pr[\sign(w \cdot x - \theta) \neq \sign(w \cdot y - \theta)]  \nonumber \\
&\stackrel{2\eps}{\approx}& \Pr[\sign(X - \theta) \neq \sign(Y - \theta)] \label{eqn:two-dim}\\
&=& \Omega(\widetilde{p}^{\frac{1}{1-\epsilon}}\sqrt{\log(1/\widetilde{p})} \cdot \sqrt{\eps}) -O(\eps)\label{eqn:gaussian}\\
&=& \Omega (p^{\frac{1}{1-\epsilon}}\sqrt{\log(1/p)} \cdot \sqrt{\eps}) - O(\eps) \label{eqn:means}
\end{eqnarray}
where (\ref{eqn:two-dim}) follows from Theorem~\ref{thm:2D-BE} 
and (\ref{eqn:gaussian}) is an application of (\ref{eqn:gns-bias}).
To see (\ref{eqn:means}), note that, by Fact~\ref{fact:be} (a corollary of 
the Berry-Ess{\'e}en theorem) we get that $p \stackrel{\eps}{\approx} 
\widetilde{p}$, and hence $|p^{1/(1-\epsilon)} \sqrt{\log(1/p)} -  \widetilde{p}^{1/(1-\epsilon)}
\sqrt{\log(1/\widetilde{p})}| = O(\eps)$.
\end{proof}

\medskip

\noindent \textbf{Case II:} [$1< \ell \leq L$.]
In this case, we show that $f$ is $\delta$-close to an $\ell$-junta.

Consider the partition of the set $[n]$ into a set of 
head variables $H=[\ell]$ and a set of tail variables $T = [n] \setminus H$.
Let us write $H(x_H)$ to denote $w_H \cdot x_H$ and $T(x_T)$ to denote 
$w_T \cdot x_T$, the linear forms corresponding to the head and the tail.

The argument proceeds as follows: If a non-trivial fraction of restrictions to the head variables lead to a not-too-biased LTF,
we show that the original LTF has high noise sensitivity contradicting the assumption of the theorem. On the other hand,
if most restrictions to the head lead to a substantially biased LTF, we argue that the original LTF is close to a junta over the head coordinates.

Let $\rho \in \bits^{|H|}$ denote an assignment to the head coordinates and $f_{\rho}$ be the corresponding restriction of $f$.
Note that for any restriction $\rho$ of the head variables the resulting
$f_\rho$ is an $\eps$-regular LTF (with a threshold of $H(\rho)-\theta$).
Formally, we consider two sub-cases depending on the distribution of 
$|\E[f_\rho]|$ for a random choice of $\rho$.

\medskip

\noindent \textbf{Case IIa:}  [This case corresponds to $\Pr_{\rho} \big[  |\E[f_{\rho}]| \leq 1-\delta   \big] > \delta$.]
That is, at least a $\delta$ fraction of restrictions to the head variables result in a ``not-too-biased'' LTF.
Since each of these restricted sub-functions has high noise-sensitivity, we can show that the overall noise-sensitivity is also somewhat high.
This intuitive claim is quantified in the following lemma.

\begin{lemma} \label{lem:ns-small-ci}
Let $\eps,\delta$ be sufficiently small values that satisfy 
$\delta^2 \geq \sqrt{\eps}$.
Let the $\eps$-critical index $\ell$ of $f$ satisfy $1 < \ell \leq L$.
If $\Pr_{\rho} \big[  |\E[f_{\rho}]| \leq 1-\delta   \big] > \delta$,
then $\NS_{\eps}(f)  = \Omega (\delta^{\frac{2-\epsilon}{1-\epsilon}} \sqrt{\log(1/\delta)} \cdot \sqrt{\eps})$.
\end{lemma}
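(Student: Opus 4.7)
The plan is to bound $\NS_\eps(f)$ from below by the expected noise sensitivity $\E_\rho[\NS_\eps(f_\rho)]$ taken over uniformly random head restrictions $\rho \in \{-1,1\}^{|H|}$, and then apply Lemma~\ref{lem:reg-ns} to those restrictions whose bias is not too close to $\pm 1$. The first task is the Fourier monotonicity inequality
\[
\NS_\eps(f) \;\geq\; \E_\rho[\NS_\eps(f_\rho)].
\]
Writing $\NS_\eps(g) = \tfrac{1}{2}(1 - \sum_S (1-2\eps)^{|S|}\hat g(S)^2)$, decomposing each $S \subseteq [n]$ as $S = S_H \sqcup S_T$, and using the standard identity $\E_\rho[\hat f_\rho(S_T)^2] = \sum_{S_H \subseteq H}\hat f(S_H \cup S_T)^2$, the difference reduces to the manifestly nonnegative expression $\tfrac{1}{2}\sum_{S_H,S_T}(1-2\eps)^{|S_T|}(1-(1-2\eps)^{|S_H|})\hat f(S_H \cup S_T)^2$. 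This is the Fourier-level statement of the intuitive fact that noising additional coordinates cannot decrease the disagreement probability.

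The second task is to apply Lemma~\ref{lem:reg-ns} to each $f_\rho$. By definition of the $\eps$-critical index, $|w_{\ell+1}| \leq |w_\ell| \leq \eps\sigma_\ell$; rearranging $\sigma_\ell^2 = w_\ell^2 + \sigma_{\ell+1}^2$ gives $|w_{\ell+1}| \leq O(\eps)\sigma_{\ell+1} = O(\eps)\|w_T\|_2$, so every $f_\rho$ is an $O(\eps)$-regular LTF on the tail (with threshold $\theta - H(\rho)$). Hence for any $\rho$ with $|\E[f_\rho]| \leq 1-\delta$, Lemma~\ref{lem:reg-ns} (applied with bias $p_\rho \geq \delta$) yields
\[
\NS_\eps(f_\rho) \;=\; \Omega(\delta^{1/(1-\eps)}\sqrt{\log(1/\delta)}\sqrt{\eps}) \;-\; O(\eps).
\]
The hypothesis $\delta^2 \geq \sqrt{\eps}$ implies $\delta^{1/(1-\eps)}\sqrt{\eps} \gg \eps / \sqrt{\log(1/\delta)}$, so the additive $O(\eps)$ error is dominated by the main term and we retain $\NS_\eps(f_\rho) = \Omega(\delta^{1/(1-\eps)}\sqrt{\log(1/\delta)}\sqrt{\eps})$ for all such $\rho$.

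Combining with the lemma's hypothesis $\Pr_\rho[|\E[f_\rho]| \leq 1-\delta] > \delta$ and the trivial bound $\NS_\eps(f_\rho) \geq 0$,
\[
\NS_\eps(f) \;\geq\; \E_\rho[\NS_\eps(f_\rho)] \;\geq\; \delta \cdot \Omega(\delta^{1/(1-\eps)}\sqrt{\log(1/\delta)}\sqrt{\eps}) \;=\; \Omega(\delta^{(2-\eps)/(1-\eps)}\sqrt{\log(1/\delta)}\sqrt{\eps}),
\]
using the identity $1 + 1/(1-\eps) = (2-\eps)/(1-\eps)$. The main conceptual step is the monotonicity inequality in the first paragraph; once it is in hand, the remainder is a direct application of Lemma~\ref{lem:reg-ns} plus minor bookkeeping to absorb the $O(\eps)$ error (which is precisely where the hypothesis $\delta^2 \geq \sqrt{\eps}$ enters). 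The extra factor of $\delta$ coming from the probability of landing in a ``not too biased'' restriction is exactly what raises the exponent of $\delta$ from $1/(1-\eps)$ to $(2-\eps)/(1-\eps)$, explaining the stronger-than-$\delta$ noise sensitivity hypothesis needed in the main theorem.
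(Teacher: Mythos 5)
Your proof is correct and follows essentially the same route as the paper: the paper's Claim~\ref{claim:ns-expected} is precisely your Fourier-level monotonicity inequality $\NS_\eps(f) \geq \E_\rho[\NS_\eps(f_\rho)]$, and the remainder applies Lemma~\ref{lem:reg-ns} to the $\eps$-regular (up to a constant factor) restrictions with $p=\delta$ exactly as you do. If anything you are slightly more explicit than the paper about why the hypothesis $\delta^2 \geq \sqrt{\eps}$ lets the $\Omega(\delta^{1/(1-\eps)}\sqrt{\log(1/\delta)}\sqrt{\eps})$ term absorb the additive $O(\eps)$ loss from invariance.
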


Therefore, in Case IIa we reach a contradiction.
To prove the above lemma, we need the following claim, whch implies
that if a noticeable fraction of restrictions to a Boolean function have high noise sensitivity, then so does the original function.

\begin{claim}\label{claim:ns-expected}
Let $f: \bn \to \bits$, $R \subseteq [n]$ and $\rho \in \bits^{|R|}$ be a random restriction to the variables in $R$. 
For any $\eps > 0$, 
if $\Pr_\rho[ \NS_\eps(f_\rho) > \tau ] > \delta$, then $\NS_\eps(f) \geq \tau \delta$. 
\end{claim}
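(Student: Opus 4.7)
The plan is to prove the claim in two short steps: first, establish the monotonicity statement $\NS_\eps(f) \geq \E_\rho[\NS_\eps(f_\rho)]$ (random restriction can only decrease noise sensitivity in expectation), and then obtain the stated lower bound by an averaging argument.

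For the first step, write $\overline{R} = [n] \setminus R$ and start from the Fourier identity for noise sensitivity recalled in Section~\ref{sec:fourier},
$$\NS_\eps(f) = \tfrac{1}{2} - \tfrac{1}{2}\littlesum_{S \subseteq [n]} (1-2\eps)^{|S|}\, \wh{f}(S)^2.$$
A standard computation of the Fourier coefficients of a restricted function shows that, for each $T \subseteq \overline{R}$,
$$\wh{f_\rho}(T) = \littlesum_{U \subseteq R} \wh{f}(T \cup U)\, \rho_U, \qquad \E_\rho \big[\wh{f_\rho}(T)^2 \big] = \littlesum_{U \subseteq R} \wh{f}(T \cup U)^2,$$
where $\rho_U := \prod_{i \in U} \rho_i$ and the second identity uses orthonormality of the characters under uniform $\rho \in \bits^{|R|}$. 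Substituting into the Fourier formula for $\NS_\eps$ and re-indexing $S = T \cup U$ yields
$$\E_\rho[\NS_\eps(f_\rho)] = \tfrac{1}{2} - \tfrac{1}{2}\littlesum_{S \subseteq [n]} (1-2\eps)^{|S \cap \overline{R}|}\, \wh{f}(S)^2.$$
Since $1 - 2\eps \in [0,1]$ and $|S \cap \overline{R}| \leq |S|$, each summand satisfies $(1-2\eps)^{|S \cap \overline{R}|} \geq (1-2\eps)^{|S|}$, so subtracting the two expressions gives $\NS_\eps(f) - \E_\rho[\NS_\eps(f_\rho)] \geq 0$.

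For the second step, since $\NS_\eps(f_\rho) \geq 0$ always, the assumption $\Pr_\rho[\NS_\eps(f_\rho) > \tau] > \delta$ combined with Markov-type reasoning gives
$$\E_\rho[\NS_\eps(f_\rho)] \ \geq \ \tau \cdot \Pr_\rho[\NS_\eps(f_\rho) > \tau] \ > \ \tau \delta.$$
Chaining this with Step~1 produces $\NS_\eps(f) > \tau\delta$, proving the claim.

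There is essentially no serious obstacle here; the only step worth remarking on is the monotonicity $\NS_\eps(f) \geq \E_\rho[\NS_\eps(f_\rho)]$, which may be read off equivalently from coupling: a $(1-2\eps)$-noise process on $[n]$ decomposes as a $(1-2\eps)$-noise process on $\overline{R}$ composed with an independent further noise on $R$, and adding extra independent noise to $y$ can only increase $\Pr[f(x) \neq f(y)]$. The Fourier route is marginally shorter and keeps the bookkeeping transparent, so I would present it that way.
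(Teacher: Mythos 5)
Your proof is correct and follows essentially the same route as the paper: both establish $\NS_\eps(f) \geq \E_\rho[\NS_\eps(f_\rho)]$ via the Fourier formula for noise sensitivity together with the identity $\E_\rho[\wh{f_\rho}(S)^2] = \sum_{T\subseteq R}\wh{f}(S\cup T)^2$, and then apply a one-line averaging/Markov argument. The only cosmetic difference is your re-indexing by $|S\cap \overline{R}|$ versus the paper's explicit double sum over $S\subseteq [n]\setminus R$ and $T\subseteq R$.
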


\begin{proof}
The following elementary fact will be useful for the proof:
\begin{fact}\label{fact:restrict}
Let $f: \bn \to \bits$, $R \subseteq [n]$ and $\rho \in \bits^{|R|}$. For any $S \subseteq ([n]\setminus R)$,
$$\E_\rho[ \wh{f_\rho}(S)^2 ] = \sum_{T\subseteq R} \wh{f}(S\cup T)^2.$$
\end{fact}
\noindent By linearity of expectation and Fact~\ref{fact:restrict} we get that
\begin{equation}\label{eqn:16.1}
\E_{\rho}[\NS_{\eps}(f_{\rho})] = \frac{1}{2} \cdot \sum_{S \subseteq 
([n]\setminus R)} (1 - (1-2\eps)^{|S|}) \cdot \sum_{T \subseteq R} \hat{f}(S \cup T)^2
\end{equation}
On the other hand, we have:
\begin{eqnarray}\label{eqn:16.2}
\NS_{\eps}(f) &=& \frac{1}{2} \cdot \sum_{S \subseteq ([n]\setminus R)} 
\sum_{T \subseteq R} \left(1 - (1-2\eps)^{|S| + |T|}\right) \cdot \hat{f}(S \cup T)^2 \nonumber \\
&\geq& \frac{1}{2} \cdot \sum_{S \subseteq ([n]\setminus R)} 
\sum_{T \subseteq R} \left(1 - (1-2\eps)^{|S|}\right) \cdot \hat{f}(S \cup T)^2 \nonumber \\
&=& \frac{1}{2} \cdot \sum_{S \subseteq ([n]\setminus R)} (1 - (1-2\eps)^{|S|}) \cdot \sum_{T \subseteq R} \hat{f}(S \cup T)^2
\end{eqnarray}
Combining equations
\ref{eqn:16.1} and \ref{eqn:16.2}, we obtain 
$$\NS_\eps(f) \geq \E_\rho[\NS_\eps(f_\rho)]\geq \delta\tau.$$
\end{proof}

Using the above claim we can prove Lemma~\ref{lem:ns-small-ci}.

\begin{proof}[\textbf{Proof of Lemma~\ref{lem:ns-small-ci}}]
By Claim~\ref{claim:ns-expected} and the assumption that
$\Pr_{\rho} \big[  |\E[f_{\rho}]| \leq 1-\delta   \big] > \delta$, it suffices to show that $f_\rho$ is noise sensitive
whenever $|\E[f_\rho]| \leq 1-\delta$, i.e., that $$\NS_\eps(f_\rho)=\Omega(\delta^{\frac{1}{1-\eps}} \sqrt{\log(1/\delta)} \cdot \sqrt{\epsilon}).$$
This follows from the fact that $f_\rho$ is an $\epsilon$-regular LTF.  Applying Lemma \ref{lem:reg-ns} with $p=\delta\geq \eps^{\frac{1-\eps}{2}}$  completes the proof. 

\end{proof}

\medskip

\noindent \textbf{Case IIb:} [The complementary case corresponds to
$\Pr_{\rho} \big[ |\E[f_{\rho}]| \leq 1-\delta \big] \leq \delta$.]
That is, with probability at least $1-\delta$ over a random
restriction of the head, the bias of the corresponding restriction is
``large.'' In this case, a simple argument yields the following:

\ignore{

\begin{lemma}\label{lemma:IIb}
Fix $0< \eps \leq 1/2$ and $0 < \delta \leq 1$ satisfying $\delta^2 \geq \sqrt{\eps}$.
Let the $\eps$-critical index $\ell$ of $f$ satisfy $1 < \ell \leq L$.
If $\Pr_{\rho} \big[  |\E[f_{\rho}]| \leq 1-\delta   \big]  \leq \delta$,
 $f$ is $2\delta$-close to an $\ell$-junta.
\end{lemma}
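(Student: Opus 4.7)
The plan is to define an explicit $\ell$-junta that approximates $f$ and then bound the expected disagreement with a two-case split. Specifically, let $H = [\ell]$ and define $g: \bn \to \bits$ by $g(x) = \sgn(\Ex[f_{x_H}])$ (breaking ties arbitrarily if $\Ex[f_{x_H}] = 0$). Since $g(x)$ depends only on the coordinates $x_H$, it is manifestly an $\ell$-junta, and it remains only to show $\Pr[f \ne g] \leq 2\delta$.

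The key observation is that for any fixed restriction $\rho \in \bits^{|H|}$, the best constant approximator to $f_\rho$ is $\sgn(\Ex[f_\rho])$, and the corresponding disagreement probability is exactly $(1 - |\Ex[f_\rho]|)/2$. Writing
\[
\Pr_x[f(x) \ne g(x)] = \Ex_{\rho}\left[\Pr_{x_T}[f_\rho(x_T) \ne \sgn(\Ex[f_\rho])]\right] = \Ex_{\rho}\!\left[\tfrac{1 - |\Ex[f_\rho]|}{2}\right],
\]
I would split the expectation according to whether $|\Ex[f_\rho]| \leq 1-\delta$ (the ``unbiased'' event, which by hypothesis has probability at most $\delta$) or $|\Ex[f_\rho]| > 1-\delta$ (the ``biased'' event).

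On the biased event, each summand is strictly less than $\delta/2$, contributing at most $\delta/2$ to the overall expectation. On the unbiased event, the summand is trivially bounded by $1/2$, so this contributes at most $\delta \cdot (1/2) = \delta/2$. Adding these gives $\Pr[f \ne g] \leq \delta \leq 2\delta$, establishing the lemma. (In fact this yields the slightly tighter bound $\delta$, but the statement as written allows a factor-two slack that would accommodate any minor loss, for instance from how ties in $\Ex[f_\rho] = 0$ are resolved.)

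The argument involves no obstacles of substance; it is a direct averaging computation over the head restrictions, leveraging only the defining property of the good event and the elementary fact that the optimal constant predictor for $f_\rho$ disagrees with it on a $(1-|\Ex[f_\rho]|)/2$ fraction of inputs. The only mild care needed is to note that $g$ really is a function of $x_H$ alone (because $\Ex[f_\rho]$ is computed over the tail randomness) and is therefore a genuine $\ell$-junta.
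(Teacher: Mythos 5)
Your proof is correct, and it takes a genuinely different (and cleaner) route than the paper. The paper's proof of the $2\delta$ statement uses the LTF structure in an essential way: it approximates $f$ by the ``natural'' junta $f'(x_H)=\sgn(H(x_H)-\theta)$ and then argues, via the regularity of the tail linear form $T(x_T)$ and the Berry--Ess{\'e}en fact, that on every good restriction $\rho$ the function $f_\rho$ agrees with $f'(\rho)$ with probability at least $1-\delta$; a union bound then gives $2\delta$. This is what forces the hypotheses about the critical index and the constraint $\delta^2\geq\sqrt{\eps}$ (so that the tail is regular enough that $1/2 + O(\eps) < 1-\delta$). Your approach instead uses the ``optimal'' junta $g(x)=\sgn(\E[f_{x_H}])$ and a one-line averaging computation: $\Pr[f\ne g]=\E_\rho\bigl[(1-|\E[f_\rho]|)/2\bigr]$, split by whether the restriction is biased. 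This buys several things at once: it never touches the LTF structure, critical index, tail regularity, or the parameter relations (you could drop the hypotheses ``$\eps\leq 1/2$,'' ``$\delta^2\geq\sqrt{\eps}$,'' and ``$1<\ell\leq L$'' entirely and the proof goes through for any $f:\bn\to\bits$ and any $H\subseteq[n]$); and it yields the sharper bound $\delta$ rather than $2\delta$. It is also worth noting that the version of this lemma the authors ultimately kept also drops the LTF hypothesis but proves it a third way, by projecting $g$ onto the Fourier coefficients supported in $H$ and using $\|h-g\|_2^2 = \E_\rho[\Var(g_\rho)]$; that route incurs a $3\delta$ loss, so your elementary averaging argument dominates both.

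Two tiny remarks: your parenthetical worry about tie-breaking at $\E[f_\rho]=0$ is moot, since a restriction with $\E[f_\rho]=0$ automatically falls in the unbiased case and contributes the trivial $1/2$ per-restriction bound regardless of how $\sgn(0)$ is resolved. And you should let the biased-event contribution be bounded by $\Pr[\text{biased}]\cdot(\delta/2)\leq \delta/2$ explicitly, as you implicitly do; it is fine as written but worth stating.
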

\begin{proof}
Let $f(x) = \sign \left( H(x_H) + T(x_T) - \theta \right)$. We will show that $f$ is $2\delta$-close to the $\ell$-junta $f'(x_H) = \sign\big( H(x_H) - \theta \big)$.
Fix a good restriction $\rho$ to the head coordinates, i.e. one such that $|\E[f_{\rho}]| \geq 1-\delta$.
Then we have that either $\Pr_{x_T} [f_{\rho}(x_T) = 1] \geq 1-\delta$ or $\Pr_{x_T} [f_{\rho}(x_T) = -1] \geq 1-\delta .$
We claim that for a good restriction $\rho$, it holds $\Pr_{x_T}[ f_{\rho}(x_T) \neq f'(\rho) ] \leq \delta$.

Consider the case that $H(\rho) - \theta \geq 0$, i.e. $f'(\rho)=1$.
(An entirely similar argument works for the case $H(\rho) - \theta < 0$.)
We want to show that $\Pr_{x_T} [f_{\rho}(x_T) = 1] \geq 1-\delta$. Suppose, for the sake of contradiction, that
$\Pr_{x_T} [f_{\rho}(x_T) = -1] \geq 1-\delta .$ This means that with probability at least $1-\delta$ over the choice of $x_T$ we have
$T(x_T) < - (H(\rho) - \theta) <0$.
The regularity of the tail implies that $T(x_T)$ is nearly symmetric around $0$, hence the latter statement is a contradiction.
Indeed, by Fact~\ref{fact:be} we have that $\Pr_{x_T}[T(x_T) < 0]  \leq 1/2+ 2\eps < 1-\delta$, for $\eps, \delta$ sufficiently small positive constants.

Since at least $1-\delta$ fraction of the restrictions are good, by a union bound it follows that $\Pr [f(x) \neq f'(x)] \leq 2\delta$ as desired.
\end{proof}

}


\begin{lemma}\label{lemma:IIb}
  Let $f\isafunc$, $H \subseteq [n]$, and $0 < \delta \leq
  1$. Suppose $\Pr_{\rho\sim H}\big[|\E[f_\rho]| \leq 1-\delta\big]
  \leq \delta$. Then $f$ is $3\delta$-close to a junta over $H$. 
\end{lemma}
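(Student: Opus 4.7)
The plan is to construct an explicit junta on $H$ and then bound its disagreement with $f$ by a simple averaging argument splitting on whether the restriction is "heavily biased" or not. Concretely, I would define $g : \bits^{|H|} \to \bits$ by $g(\rho) = \sgn(\E[f_\rho])$, with some fixed convention (say $\sgn(0) = 1$) to resolve the measure-zero ambiguity. This $g$ depends only on the coordinates in $H$, so it is a junta over $H$, and it is the natural candidate: on each fiber $\{\rho\} \times \bits^{[n]\setminus H}$ it takes the majority value of $f$.

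Call a restriction $\rho \in \bits^{|H|}$ \emph{biased} if $|\E[f_\rho]| \geq 1-\delta$ and \emph{unbiased} otherwise. The hypothesis is precisely that $\Pr_\rho[\rho \text{ unbiased}] \leq \delta$. The key observation is that for any biased $\rho$, the function $f_\rho$ is very close to the constant $g(\rho)$: writing $\E[f_\rho] = 1 - 2\Pr_{x_T}[f_\rho(x_T) = -1]$ when $g(\rho) = 1$ (and symmetrically in the other case), the bound $|\E[f_\rho]| \geq 1-\delta$ immediately gives
\[
\Pr_{x_T}\bigl[f_\rho(x_T) \neq g(\rho)\bigr] \;\leq\; \delta/2.
\]

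Combining these two facts via the law of total probability completes the argument:
\[
\Pr_x\bigl[f(x) \neq g(x_H)\bigr] \;=\; \E_\rho\Bigl[\Pr_{x_T}\bigl[f_\rho(x_T) \neq g(\rho)\bigr]\Bigr] \;\leq\; \Pr_\rho[\text{unbiased}] \cdot 1 + \Pr_\rho[\text{biased}] \cdot \tfrac{\delta}{2} \;\leq\; \delta + \tfrac{\delta}{2},
\]
which is bounded by $3\delta$ (in fact $3\delta/2$, but the statement only claims $3\delta$).

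There is really no hard step here; the content of the lemma is the observation that a heavily biased $f_\rho$ differs from its majority value on at most a $\delta/2$ fraction of inputs, so the union bound over the at-most-$\delta$ fraction of "bad" restrictions introduces only an additive $\delta$ term. The only stylistic point worth mentioning is the tie-breaking convention for $g$ when $\E[f_\rho] = 0$, which is immaterial since any fixed choice works.
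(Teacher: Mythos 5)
Your proof is correct, and it actually yields the stronger bound $3\delta/2$. It also takes a genuinely different route from the paper's. You define the ``majority-vote'' junta $g(\rho) = \sgn(\E[f_\rho])$ directly and bound $\Pr[f \neq g]$ by conditioning on whether the head restriction is biased; the whole argument is elementary probability. The paper instead first patches $f$ to a $\delta$-close function $g$ that is constant on the bad restrictions, then forms the Fourier projection $h(x) = \sum_{S\subseteq H}\hat{g}(S)x_S$ onto functions of $H$, bounds $\|h-g\|_2^2 = \E_\rho[\Var(g_\rho)] < 2\delta$ using the fact that every $g_\rho$ now has bias $> 1-\delta$, and finally rounds $h$ to $\sgn(h)$. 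Both proofs give the claimed $3\delta$, but yours avoids the Fourier/variance machinery entirely and is tighter; the paper's phrasing has the minor advantage of exhibiting the junta as the sign of the low-dimensional Fourier projection of (a patched) $f$, which is a form some readers may find more structurally informative. Either way the content is the same observation you isolate at the end: a restriction with bias at least $1-\delta$ disagrees with its majority value on at most a $\delta/2$ fraction of tail inputs.
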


\begin{proof}
  Let $B\subseteq \bits^{|H|}$ denote the set of bad restrictions,
  where we say that a restriction $\rho\in \bits^{|H|}$ is bad if
  $|\E[f_\rho]| \leq 1-\delta$.  Define $g\isafunc$ to be:
\[ 
g(x) = \left\{ 
\begin{array}{cl}
1 & \text{if $x_H\in B$} \\ 
f(x)& \text{otherwise,} 
\end{array}
\right. 
\]
and note that $g$ is $\delta$-close to $f$ since $|B| \leq \delta\cdot
2^{|H|}$ by assumption.  We also have that $g$ satisfies
$|\widehat{g_\rho}(\emptyset)| = |\E[g_\rho]|
> 1 - \delta$ for all $\rho \in \bits^{|H|}$. Now consider $h(x) =
\sum_{S\subseteq H} \hat{g}(S)x_S$ and note that
\[ \|h-g\|_2^2 = \sum_{T\not\subseteq H}\hat{g}(T)^2 = \Ex_{\rho\sim
  H}[\Var(g_\rho)] = 1 - \Ex_{\rho\sim
  H}[\widehat{g_\rho}(\emptyset)^2] < 1 - (1-2\delta) = 
2\delta. \] Since $f$ is $\delta$-close to $g$ and $g$ is
$2\delta$-close to $\sgn(h)$ (a junta over $H$), this completes the
proof. 
\end{proof}

\noindent This completes Case II.

\medskip

\noindent \textbf{Case III:} [$\ell > L$].
In this case, we merely observe that $f$ is $\delta$-close to an $L$-junta.
This follows immediately from the arguments in~\cite{Servedio:07cc, DGJ+:10}. 
In particular,

\begin{lemma}[Case II(a) of Theorem~1 of \cite{Servedio:07cc}]\label{lemma:case3}
Fix $\eps, \delta>0$.
Let $f$ be an LTF with $\eps$-critical index $\ell >L$. Then $f$ is $\delta$-close to an $L$-junta.
\end{lemma}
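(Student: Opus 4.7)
The main engine of the proof is the geometric decay forced by the critical-index hypothesis. Since $|w_i| > \eps\sigma_i$ for every $i \leq L$, squaring and using $\sigma_i^2 = w_i^2 + \sigma_{i+1}^2$ gives $\sigma_{i+1}^2 < (1-\eps^2)\sigma_i^2$, so
\[
\sigma_{L+1}^2 \;\leq\; (1-\eps^2)^L \;\leq\; \exp(-\eps^2 L).
\]
With $L = \Theta((1/\eps^2)\log(1/\eps)\log(1/\delta))$, this yields $\sigma_{L+1} \leq \delta^{\Omega(\log(1/\eps))}$, which is superpolynomially smaller in $\delta$ than any scale that will appear in the rest of the argument. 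The plan is to exploit this smallness to show that freezing the first $L$ variables determines $f$ with high probability over the remaining ones.

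I would take the candidate junta to be $f'(\rho) \eqdef \sgn(\E[f_\rho])$ on $H = [L]$, the optimal $L$-junta approximator, so that $\Pr[f \neq f'] = \tfrac12 (1 - \E_\rho[|\E[f_\rho]|])$ and the task reduces to showing $\E_\rho[|\E[f_\rho]|] \geq 1 - 2\delta$. Each restriction yields a tail LTF $f_\rho(x_T) = \sgn(w_T \cdot x_T - \theta_\rho)$ with shifted threshold $\theta_\rho \eqdef \theta - w_H \cdot \rho$ and tail sum of $L_2$-norm $\sigma_{L+1}$, so Hoeffding (Theorem~\ref{thm:chb}) applied to the tail gives the pointwise bound
\[
1 - |\E[f_\rho]| \;\leq\; 4\exp\!\big(-\theta_\rho^2/(2\sigma_{L+1}^2)\big).
\]
It remains to bound $\E_\rho\big[\exp(-\theta_\rho^2/(2\sigma_{L+1}^2))\big]$ by $\delta/2$, and splitting at $\mu \eqdef \sigma_{L+1}\sqrt{2\log(16/\delta)}$ reduces this to the anti-concentration estimate $\Pr_\rho[|\theta - w_H \cdot \rho| \leq \mu] \leq \delta/4$.

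The anti-concentration step is where I expect the main obstacle, since $w_H$ is not $\eps$-regular --- its mass is front-loaded --- so invoking Berry-Ess\'een is inadequate. The resolution, carried out in detail in Case~II(a) of Theorem~1 of~\cite{Servedio:07cc} (and in~\cite{DGJ+:10}), is to exploit the fact that $w_H$ inherits the same critical-index structure: every $|w_i| \geq \eps\sigma_L$ for $i \leq L$, so a Littlewood--Offord/Erd\H{o}s-style bound at scale $\mu$, combined with the fact that $\mu$ is inverse-superpolynomially small (thanks precisely to the $\log(1/\eps)$ slack in the choice of $L$), yields the required anti-concentration. Citing that argument completes the proof.
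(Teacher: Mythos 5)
The paper gives no proof here: Case III is disposed of purely by citation to \cite{Servedio:07cc} (and \cite{DGJ+:10}), so there is no internal argument to compare your sketch against. Your reconstruction of what that cited argument must look like is essentially faithful, and the reductions you spell out are correct: the geometric decay $\sigma_{i+1}^2 < (1-\eps^2)\sigma_i^2$ forced by $\ell > L$, the choice of junta $f'(\rho) = \sgn(\E[f_\rho])$ so that $\Pr[f\neq f'] = \tfrac12(1-\E_\rho|\E[f_\rho]|)$, the pointwise Hoeffding bound $1 - |\E[f_\rho]| \leq 4\exp(-\theta_\rho^2/(2\sigma_{L+1}^2))$, and the split at $\mu = \sigma_{L+1}\sqrt{2\log(16/\delta)}$ reducing everything to the anti-concentration estimate $\Pr_\rho[|\theta - w_H\cdot\rho|\leq\mu]\leq\delta/4$.

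The one place where the sketch is misleading is the mechanism you gesture at for that anti-concentration step. A plain Littlewood--Offord bound using $|w_i| \geq \eps\sigma_L$ for all $i\leq L$ gives at best $O(\lceil\mu/(\eps\sigma_L)\rceil/\sqrt{L})$, which is polynomial in $1/L$. Plugging in $L = \Theta((1/\eps^2)\log(1/\eps)\log(1/\delta))$ and $\mu/(\eps\sigma_L) = O(\sqrt{\log(1/\delta)}/\eps)$, this works out to $O(1/\sqrt{\log(1/\eps)})$, which is not $O(\delta)$ in general. The actual argument in Case~II(a) of \cite{Servedio:07cc} (and its variant in \cite{DGJ+:10}) is a \emph{block} decomposition: the head coordinates are partitioned into $\Omega(\log(1/\delta))$ consecutive groups of size $O((1/\eps^2)\log(1/\eps))$, chosen so that the $\sigma$-scale drops by a $\mathrm{poly}(\eps)$ factor across each group. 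One then conditions group by group and argues that each group independently has constant probability of escaping the window, which is what yields the exponential $2^{-\Omega(\log(1/\delta))} = \mathrm{poly}(\delta)$ decay that the polynomial LO bound cannot deliver. Since you defer to \cite{Servedio:07cc} for this step anyway, your proposal is not incorrect --- it just attributes the wrong tool --- but if you were to try to make the LO route self-contained it would fail.
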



\noindent The proof of Theorem~\ref{thm:main} is now complete.



\inote{Things to do:

\begin{itemize}

\item update the table -- some entries need reworking.  Any way to make it
visually better somehow?

\end{itemize}

}

\noindent \textbf{Acknowledgements.} I.D. would like to thank Guy Kindler and Elad Hazan for useful discussions.

\bibliography{allrefs}
\bibliographystyle{alpha}

\end{document}